\documentclass[runningheads]{llncs}
\usepackage[T1]{fontenc}

%===============================================================================
% Packages
%===============================================================================
\usepackage[symbol]{footmisc}
\usepackage{filecontents}
\usepackage{graphicx}
\usepackage{hhline}
\usepackage{xspace}
\usepackage{balance}
\usepackage{ulem}
\usepackage{url}

\usepackage{colortbl}
\usepackage{multirow}
\usepackage{multicol}
\usepackage{capt-of}

\usepackage{tikz}

\usepackage{amsmath}
\usepackage{enumerate}
\usepackage{enumitem}
\usepackage{orcidlink}

% Algorithms​
\usepackage{algorithmicx}
\usepackage{algorithm}
\usepackage{algpseudocode}
\usepackage{varwidth}

\usepackage{amsfonts}
\usepackage{textcomp}
\usepackage{gensymb}
\usepackage{makecell}
\usepackage{tabularx, booktabs}
\usepackage{array}
\usepackage{times}
\usepackage{fullpage}
\usepackage{dsfont}
\usepackage{epsfig}
\usepackage{microtype}
\usepackage{mathtools}
\usepackage{cleveref}
\usepackage[disable]{todonotes}
\usepackage{bm}

%===============================================================================
% Definitions
%===============================================================================

\newcommand{\allprocesses}{N}
\newcommand{\system}{\textsc{BBCA-Chain}\xspace}

\newcommand{\primitive}{BBCA}
\newcommand{\complete}{complete\xspace}
\newcommand{\completed}{completed\xspace}
\newcommand{\completes}{completes\xspace}
\newcommand{\Complete}{Complete\xspace}
\newcommand{\completeCert}{completeCert\xspace}
\newcommand{\completion}{completion\xspace}
\newcommand{\prim}{\primitive\xspace}
\newcommand{\bbca}{\textsc{BBCA}\xspace}
\newcommand{\bbcaadopt}[3]{\langle\textsc{BBCA-adopt}, #1, #2, #3\rangle}
\newcommand{\bbcanoadopt}[1]{\langle\textsc{BBCA-noadopt}, #1\rangle}
\newcommand{\bbcacomplete}[3]{\textsc{BBCA-complete}(#1, #2, #3)}
\newcommand{\newview}{new-view\xspace}
\newcommand{\leader}{backbone\xspace}
\newcommand{\regular}{data\xspace}
\newcommand{\cbc}{BCB\xspace}
\newcommand{\rbc}{BRB\xspace}

\newcommand{\newinbbca}[1]{{\color{blue}#1}}
\newcommand{\removedinbbca}[1]{{\color{red}#1}}

\newcommand{\pendingMsgs}{\mathit{pendingMsgs}}
\newcommand{\bbcaState}{\mathit{bbcaState}}
\newcommand{\view}{\mathit{view}}
\newcommand{\viewTimer}{\mathit{viewTimer}}
\newcommand{\committed}{\mathit{completed}}
\newcommand{\adopted}{\mathit{adopted}}
\newcommand{\skipped}{\mathit{noadopted}}
\newcommand{\lastCommitted}{\mathit{lastCommitted}}
\newcommand{\blockref}{\mathit{ref}}
\newcommand{\prev}{\mathit{prevView}}
\newcommand{\backbone}{\mathit{backbone}}
\newcommand{\newviewblocks}{\mathit{newViewBlocks}}
\newcommand{\newViewBlock}{\mathit{newViewBlock}}
\newcommand{\finalized}{\mathit{finalized}}

\newcommand{\cert}{\mathit{cert}}
\newcommand{\sig}{\mathit{sig}}

\newcommand{\receivedEcho}{\mathit{receivedEcho}}
\newcommand{\receivedReady}{\mathit{receivedReady}}

\newcommand{\echo}{\mathit{echo}}
\newcommand{\ready}{\mathit{ready}}
\newcommand{\abort}{\mathit{abort}}
\newcommand{\nEcho}{\mathit{nEcho}}
\newcommand{\nReady}{\mathit{nReady}}

\newcommand{\bid}{\mathit{bid}}
\newcommand{\sender}{\mathit{sender}}
\newcommand{\broadcast}[1]{\textsc{broadcast}(#1)}
\newcommand{\initMsg}[2]{\langle\textsc{Init}, #1, #2 \rangle}
\newcommand{\echoMsg}[2]{\langle\textsc{Echo}, #1, #2 \rangle}
\newcommand{\readyMsg}[2]{\langle\textsc{Ready}, #1, #2 \rangle}

\newcommand{\bstate}{\mathit{bstate}}
\newcommand{\vstate}{\mathit{mstate}}
\newcommand{\vstates}{\mathit{mstates}}
\newcommand{\InlineComment}[1]{\State{\color{gray}// #1}}

\newcommand{\letvar}[2]{\textbf{let}\,\,{#1} = {#2}}
\newcommand{\ifthen}[2]{\textbf{if} #1 \textbf{then} #2}
\algrenewcommand\algorithmicindent{1.0em}

\newcommand{\true}{\textsc{true}\xspace}
\newcommand{\false}{\textsc{false}\xspace}

%===============================================================================
% Algorithms
%===============================================================================

\newcommand{\event}[3]{
    \ifthenelse
    {\equal{#3}{}}
    {\ap{#1.\textrm{#2}}}
    {\ap{#1.\textrm{#2} \mid #3}}
}

\algnewcommand\Instance[2]{\State #1, \textbf{instance} #2}
\algnewcommand\InstanceSystem[3]{\State #1, \textbf{instance} #2, \textbf{system} #3}
\algnewcommand\Trigger[3]{\State \textbf{trigger} $\event{#1}{#2}{#3}$}
\algnewcommand\Schedule[1]{\State \textbf{schedule} $#1$}
\algnewcommand\Cancel[1]{\State \textbf{cancel} $#1$}
\algnewcommand\Broadcast[1]{\State \textbf{broadcast} $#1$}
\algnewcommand\Import[1]{\State \textbf{import} $#1$}

\algnewcommand\Not{\textbf{ not }}
\algnewcommand\AndT{\textbf{ and }}
\algnewcommand\OrT{\textbf{ or }}
\algnewcommand\In{\textbf{ in }}

\algsetblockdefx[Implements]{Implements}{EndImplements}{}{}{\textbf{Implements:}}{}
\algsetblockdefx[Uses]{Uses}{EndUses}{}{}{\textbf{Uses:}}{}
\algsetblockdefx[Init]{Init}{EndInit}{}{}{\textbf{Init:}}{}
\algsetblockdefx[Parameters]{Parameters}{EndParameters}{}{}{\textbf{Parameters:}}{}
\algsetblockdefx[Struct]{Struct}{EndStruct}{}{}[1]{\textbf{Struct} $#1$ \textbf{contains}}{}

%\algsetblockdefx[Upon]{Upon}{EndUpon}{}{}[3]{\textbf{upon event} $\event{#1}{#2}{#3}$ \textbf{do}}{}
\algsetblockdefx[Upon]{Upon}{EndUpon}{}{}[1]{\textbf{upon} $#1$ \textbf{do}}{}
\algsetblockdefx[UponExists]{UponExists}{EndUponExists}{}{}[2]{\textbf{upon exists} $#1$ \textbf{such that} $#2$ \textbf{do}}{}
\algsetblockdefx[UponCondition]{UponCondition}{EndUponCondition}{}{}[1]{\textbf{upon} $#1$ \textbf{do}}{}
\algsetblockdefx[UponReceiving]{UponReceiving}{EndUponReceiving}{}{}[1]{\textbf{upon receiving} $#1$ \textbf{do}}{}
\algsetblockdefx[UponEvent]{UponEvent}{EndUponEvent}{}{}[1]{\textbf{upon event} $#1$ \textbf{do}}{}
\algsetblockdefx[UponMsg]{UponMsg}{EndUponMsg}{}{}[2]{\textbf{upon receiving} $#1$ \textbf{from} $#2$ \textbf{do}}{}
\algsetblockdefx[UponReceivingX]{UponReceivingX}{EndUponReceivingX}{}{}[3]{\textbf{upon receiving #1} $#2$ \textbf{from} $#3$ \textbf{do}}{}

\algsetblockdefx[Process]{Process}{EndProcess}{}{}[2]{\textbf{process} \emph{#1}($#2$) \textbf{:}}{}
\algsetblockdefx[Function]{Function}{EndFunction}{}{}[2]{\textbf{function} \emph{#1}($#2$) \textbf{:}}{}
\algsetblockdefx[Procedure]{Procedure}{EndProcedure}{}{}[2]{\textbf{procedure} \emph{#1}($#2$) \textbf{is}}{}
\algsetblockdefx[FunctionNoArgs]{FunctionNoArgs}{EndFunctionNoArgs}{}{}[1]{\textbf{function} \emph{#1}() \textbf{:}}{}
\algsetblockdefx[Interface]{Interface}{EndInterface}{}{}[2]{\textbf{interface} \emph{#1}($#2$) \textbf{:}}{}

\algdef{SE}[DoUntil]{DoUntil}{EndDoUntil}{\textbf{do}}[1]{\textbf{until} $#1$}
% \algsetblockdefx[If]{If}{EndIf}{}{}[1]{\textbf{if} $#1$ \textbf{then}}{\textbf{end if}}
\algsetblockdefx[Struct]{Struct}{EndStruct}{}{}[1]{\textbf{Struct} $#1$ \textbf{contains}}{}
\algsetblockdefx[IfExists]{IfExists}{EndIfExists}{}{}[2]{\textbf{if exists} $#1$ \textbf{such that} $#2$ \textbf{then}}{\textbf{end if}}
\algsetblockdefx[While]{While}{EndWhile}{}{}[1]{\textbf{while} $#1$ \textbf{do}}{\textbf{end while}}
\algsetblockdefx[NTimes]{NTimes}{EndNTimes}{}{}[1]{\textbf{for} $#1$ \textbf{times do}}{\textbf{end for}}
\algsetblockdefx[For]{For}{EndFor}{}{}[3]{\textbf{for} $#1 \in #2..#3$ \textbf{do}}{\textbf{end for}}
\algsetblockdefx[ForAll]{ForAll}{EndForAll}{}{}[2]{\textbf{for} $#1 \in #2$ \textbf{do}}{\textbf{end for}}

\graphicspath{{figures/}}

\begin{document}
\title{\system: Low Latency, High Throughput BFT Consensus on a DAG}
\titlerunning{Low Latency, High Throughput BFT Consensus on a DAG}

\author{Dahlia Malkhi \orcidlink{0000-0002-7038-7250} \and
Chrysoula Stathakopoulou\orcidlink{0000-0002-3958-9735} \and
Maofan Yin\orcidlink{0009-0006-2717-8150}}
\authorrunning{D. Malkhi et al.}
\institute{Chainlink Labs}
\maketitle
\begin{abstract}
This paper presents a partially synchronous BFT consensus protocol powered by BBCA, a lightly modified Byzantine Consistent Broadcast (\cbc) primitive.
BBCA provides a Complete-Adopt semantic through an added probing interface to allow either aborting the broadcast by correct nodes or exclusively, adopting the message consistently in case of a potential delivery.
It does not introduce any extra types of messages or additional communication costs to \cbc.

BBCA is harnessed into BBCA-CHAIN to make direct commits on a chained backbone of a causally ordered graph of blocks, without any additional voting blocks or artificial layering.  With the help of Complete-Adopt, the additional knowledge gained from the underlying \cbc completely removes the voting latency in popular DAG-based protocols.  At the same time, causal ordering allows nodes to propose blocks in parallel and achieve high throughput.  BBCA-CHAIN thus closes up the gap between protocols built by consistent broadcasts (e.g., Bullshark) to those without such an abstraction (e.g., PBFT/HotStuff), emphasizing their shared fundamental principles.

Using a Bracha-style \cbc as an example, we fully specify BBCA-CHAIN with simplicity, serving as a solid basis for high-performance replication systems (and blockchains).
\end{abstract}

%---main core---
\section{Introduction}
%  —--* —--* —--* —--* —--* —--* —--* —--*
A consensus protocol allows a network of nodes to consistently commit a sequence of values (referred to as \emph{blocks}).
Usually, practical solutions assume a \emph{partially synchronous} network
that may suffer from temporary outages or periods of congestion, where
protocols can tolerate Byzantine faulty nodes exhibiting arbitrary behavior
as long as a \emph{quorum} of two-thirds is correct.

After decades, this line of research has reached the known communication complexity lower bound that is linear~\cite{HotStuff-19,malkhi2023hotstuff} for steady operations and upon reconfiguration.
However, these solutions operate in a sequential manner, extending the committed prefix one proposed block at a time.
Hence, throughput is limited by the (underutilized) proposer’s computational power, as it must wait for a round-trip to collect a quorum before moving on.

To fully utilize the idle time in which a proposer waits for the next sequential progress,
recent works~\cite{GagolLSS19,DAG-rider,Narwhal,spiegelman2022bullshark,fino} harness Lamport's causal communication to allow blocks to be injected to the network in parallel to finalize their commit ordering.
Each block references one or more previous blocks and the protocol relies on causal ordering,
so that when nodes deliver the block, they have already delivered
the blocks in its entire ancestry on a Direct Acyclic Graph (DAG).
Then, the consensus logic ``rides'' on the DAG, so that nodes interpret it locally for a sequentially committed \emph{backbone} (the chain made from blocks with crowns in Figure~\ref{fig:backbone}). Each backbone block also commits other non-backbone blocks it transitively references. Systems based on such a design have demonstrated significant throughput gains.

\begin{figure}[th]
  \centering
  \includegraphics[width=.45\textwidth]{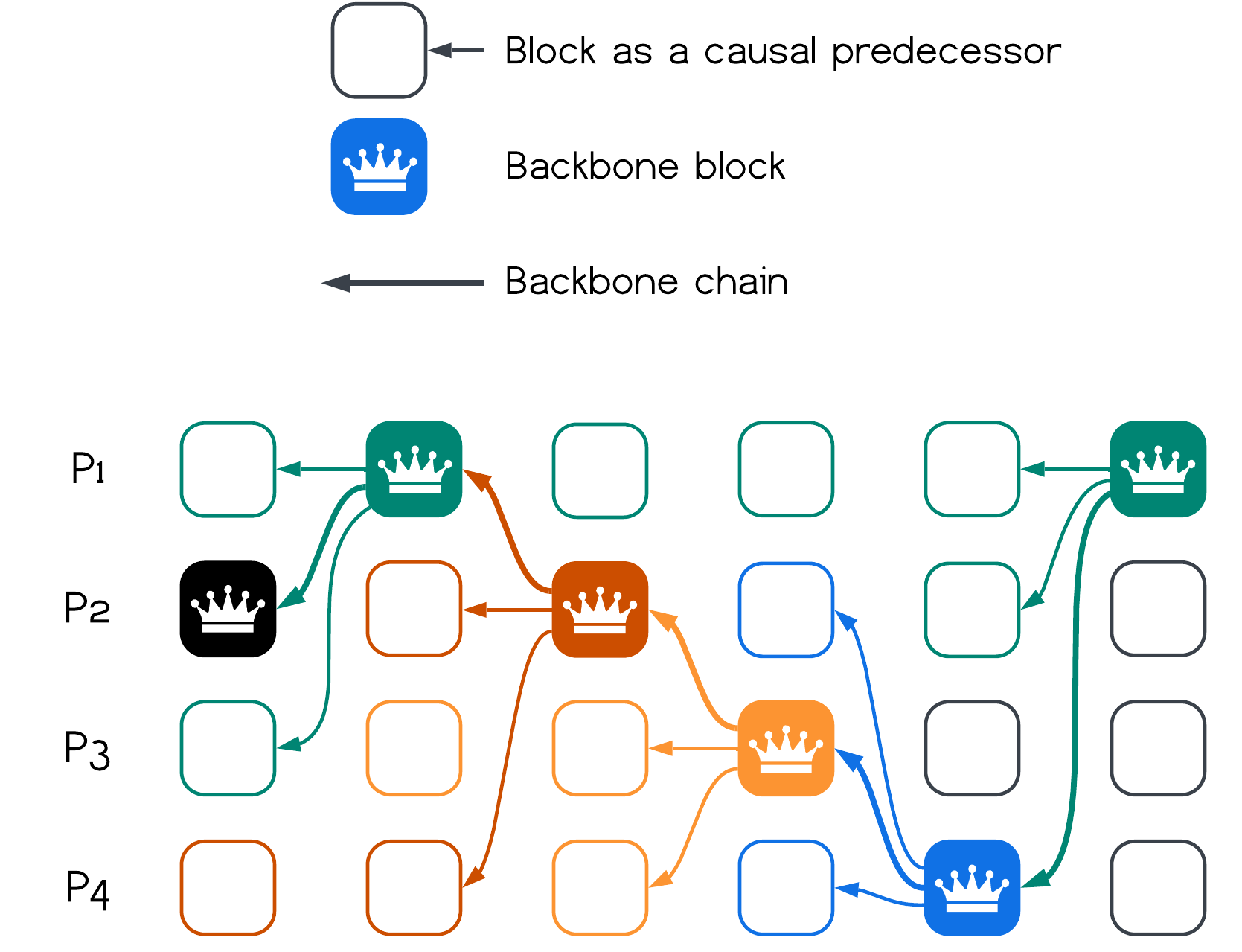}
  \caption{A sequenced backbone on a DAG (thick arrows and solid blocks) and referenced blocks that commit with it (thin arrows and hollow blocks).}
  \label{fig:backbone}
\end{figure}

However, existing DAG-riding solutions come with a relatively complicated logic and suffer from several times higher latency.
For example, a solution like Bullshark~\cite{spiegelman2022bullshark} takes a minimum of 4 chained blocks to commit one block,
as illustrated on the left of Figure~\ref{fig:tricks}:
a regular block at the head of the chain needs to first be delivered into the DAG; secondly it must be followed and referenced by a leader block which is restricted to odd layers of the DAG, hence may take two more blocks;
third,
the proposal must be referenced by a quorum of vote blocks that need to be on a subsequent, even layer dedicated to voting.
Each block in this procedure has to be delivered by a Byzantine Consistent Broadcast (\cbc) instance to guarantee a consistent view at all nodes.
Internally, \cbc already consists of multiple rounds of messages,

This paper introduces \system, a new DAG-based consensus solution that enables the same high throughput with much lower latency.
\system{} substantially simplifies the overall consensus logic: it commits a backbone block via a single \bbca broadcast,
a variant of \cbc,
without requiring vote blocks or waiting to broadcast into the artificial layers.
Apart from using fewer steps than existing protocols, it also places fewer constraints on how the DAG could be grown.

\begin{figure*}[th]
  \centering
  \includegraphics[width=\textwidth]{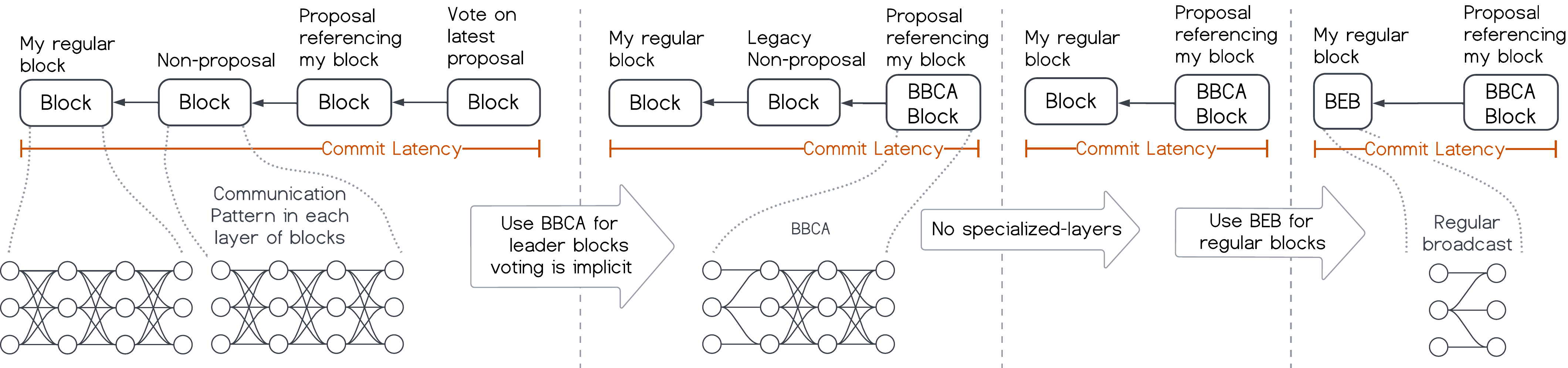}
  \caption{Reducing Bullshark commit latency (left) via three mechanisms.}
  \label{fig:tricks}
\end{figure*}

\system achieves these improvements via three mechanisms depicted in Figure~\ref{fig:tricks} from left to right:

Instead of encoding the voting process through the DAG layers, it removes vote blocks altogether because \bbca addresses voting internally.
More specifically, by peeking inside \cbc, we observe that it is already a multi-phase process similar to
what other consensus protocols (such as PBFT~\cite{PBFT}) do in their main logic.
We add a shim on top of \cbc to create \bbca (Byzantine Broadcast with \Complete-Adopt),
a new primitive that supports \emph{probing} the local state of the broadcast protocol.
The key new feature is a \Complete-Adopt interface: if a broadcast is \completed by any non-faulty node running the \bbca protocol, then a quorum probing via
\bbca-probe will \emph{adopt} it. Said differently, if a quorum probes and gets a \textsc{\bbca-noadopt} response,
then it is guaranteed that no \completion of the \bbca broadcast will ever occur to any (non-faulty) node.
Through the \Complete-Adopt interface from BBCA, one can extract this useful invariant to directly drive the committed backbone.
As probing is done locally with a simple check of some protocol state variables, it does not incur any additional communication.

As no vote blocks are needed, \system does not need special layering, which further reduces the latency.
As depicted in the middle of Figure~\ref{fig:tricks}, a block can be immediately referenced by a leader block and become committed when the leader block is \completed by \prim-broadcast.
Finally, only leader blocks need to use \bbca. All other blocks can be broadcast with a ``Best-Effort Broadcast'' that simply sends the same message to all nodes via the inter-node channels (denoted as ``BEB'' block on the right of Figure~\ref{fig:tricks}).

The table below summarizes the latency in terms of consecutive network-trips of \system compared with state-of-art DAG solutions like Bullshark. The implementation of the \bbca broadcast is assumed to use an all-to-all protocol, since there is only a single \bbca invocation per view (see more details on broadcasts in the body of the paper). %Ted: use "view" in place of "DAG layer" as we already said layering is removed in our protocol, to avoid confusion.

\begin{table}[h!]
\small
\centering
\begin{tabular}[width=\textwidth]{r|c|c}
%\hline
& \system & Bullshark \\ \hline\hline
Broadcast primitives & \makecell{BEB (1 trip) /\\ \bbca (3 trips)} & \cbc (3 trips)\\ \hline
leader block & $1 \times$ \bbca & $2 \times$ \cbc \\ \hline
non-leader block & $1 \times$ BEB $\ +\ 1 \times$ \bbca & $4 \times$ \cbc \\% \hline
\end{tabular}
\vspace{0.2cm}
\caption{Latency to commit a block on a DAG.}
    \label{tab:consensus-latency}
\end{table}

\paragraph{The Benefits of Causal Ordering.}
\system is an interesting mid-point between non-DAG and DAG-based consensus solutions,
that enjoys the best of both worlds: the same latency as in those non-DAG protocols, the high throughput from
a better utilization of processing power while waiting for backbone's progress, and the simplicity of checking a DAG
to know the commitment progress.

In steady state,
the way the sequenced backbone is formed in \system resembles traditional non-DAG consensus protocols.
The difference is that in the latter,
the network is underutilized as the protocol only disseminates blocks on the linear backbone, whose
rate of progress is bounded by a quorum round-trip time, whereas DAG-based protocols can keep injecting
non-backbone blocks, which carry useful payload, without waiting.
Furthermore, the leader has to broadcast a (potentially large) block combining all the requests.
Previous studies~\cite{mirbft} have demonstrated that a single leader becomes a throughput bottleneck, even if it broadcasts blocks which only include references to transactions.
\system leverages causal communication to allow nodes to inject blocks in parallel.
Then the leader sends a (smaller) block that simply references other blocks.
In addition, \system benefits from being able to parallelize the validation of blocks, offloading the
computational burden from the backbone.

At the same time, the way \system handles failures borrows the simplified logic of DAG-based consensus.
When leaders fail, the causal history allows nodes to accept leader proposals without reasoning about (un)locking.
Compared to existing DAG-based protocols,
\system further simplifies the solution by requiring nodes only to inject \newview blocks into the DAG, foregoing vote blocks.

Foregoing block-votes on the DAG has two additional far-reaching benefits.
\begin{enumerate}
\item It avoids the extra latency introduced in previous DAG-based solutions.
Rather, \system makes use of acknowledgment messages which are in any case sent inside the \bbca broadcast protocol as votes.

\item Whereas votes must be non-equivocating, blocks do not need to.
Because we got rid of vote blocks, there is no need to use \cbc for non-leader blocks.
Instead, regular blocks can be uniquely ordered via the backbone utilizing only causality.
This change allows further reduction in latency.
\end{enumerate}

\section{Technical Overview}
\label{sec:overview}

% --- . --- .--- . --- .--- . --- .--- . --- .--- . --- .
\paragraph{Consensus.}

In the celebrated State Machine Replication (SMR) approach, a network of validator nodes form consensus on a growing totally ordered log (or ``chain'') of transactions.
The focus in this paper is on practical settings in which during periods of stability, transmission delays on the network are upper bounded, but occasionally the system may suffer unstable periods. This is known as the \emph{partial synchrony} model~\cite{DworkLS88}.
The strongest resilience against Byzantine faults which is possible in this setting is $f$-tolerance, where $f < n/3$ for a network of $n$ validators.

% --- . --- .--- . --- .--- . --- .--- . --- .--- . --- .

\paragraph{Parallel Leader Proposals.}
Concensus protocols require a \emph{leader} node to propose the next transaction to be added to the chain
and broadcast it to all validator nodes.
Then, validators run a voting protocol to \emph{commit} the leader proposal.
In practice, to amortize the computation and communication per voting decision,
leaders bundle multiple transactions in blocks in their proposals.
However, this mechanism alone is not enough to ensure high throughput, as the larger the set of validators,
the longer it takes for the leader to broadcast the proposal.
During this time, the rest of nodes remain idle.
To enable throughput scalability, we borrow from recent consensus protocols (e.g.~\cite{smrmadesimple,RCC,Narwhal,spiegelman2022bullshark}) and allow parallel leader proposals.

% --- . --- .--- . --- .--- . --- .--- . --- .--- . --- .
\paragraph{Causal Broadcast.}
A Direct Acyclic Graph (DAG) captures a transport substrate that guarantees causal ordering of block delivery.
Whenever a node is ready to broadcast a block, the block contains references to some locally delivered blocks.
The consensus logic operates above the transport level,
such that blocks are handled (received and generated) in causal order.

% --- . --- .--- . --- .--- . --- .--- . --- .--- . --- .
\paragraph{The Backbone.}
The general approach we take for DAG-based consensus is forming a totally ordered \emph{backbone} of leader blocks
which we therefore call \leader blocks  (Figure~\ref{fig:backbone}).
When a \leader block becomes committed, every leader block in its causal ancestry becomes committed as well.
By walking from the earliest uncommitted leader block in the backbone towards the latest committed leader block,
one can also commit into a total order those blocks outside the backbone but causally referenced by the \leader blocks, according
to any predefined, deterministic DAG traversal order.
Forming consensus on the backbone itself ``rides'' on the DAG:
any validator node can independently and deterministically compute a \emph{commit structure}
to determine the order in which blocks become committed to the backbone by processing the DAG,
without additional communication or historical states outside the DAG.
In previous solutions, deducing the commit structure is somewhat complex,
as the DAG consists of interleaving voting and non-voting block layers.
As we shall see below, the commit structure in \system is very simple:
backbone blocks become committed by themselves,
without the need of voting block layers,
utilizing the invariants from \Complete-Adopt of \bbca.

%  --- . --- .--- . --- .--- . --- .--- . --- .--- . --- .

\paragraph{Separating Data Availability.}

Borrowing from various prevailing systems,
we assume that a separate data availability layer may be used for disseminating data blobs.
First, parallel workers disseminate transactions and generate certificates of availability.
Then, bundles of certificates are assembled into blocks which are totally ordered in the consensus layer.
This way, when a leader proposes a block containing a bundle of certificates, validators trust that their data can be retrieved.
As demonstrated in prior work~\cite{Narwhal}~\cite{yang2022dispersedledger}, offloading data dissemination to worker nodes helps a system to effectively
scale out.
Note that this mechanism alone is not enough to ensure high throughput;
as it has been demonstrated~\cite{mirbft} the throughput capacity of a consensus protocol,
which does not support parallelized proposals,
drops when the number of validator nodes increases, even if they only orders references to data.
However, it can be combined with parallel leader proposals for improving performance and scalability.
The data dissemination and data availability work is left out of scope in this paper
and we note that it can be highly parallelized, and is not considered in the critical path of consensus or counted in latency analysis.

In the rest of this section, we first introduce the \bbca broadcast primitive,
and then we explain how \bbca is used in \system to solve consensus on a DAG\@.

% --- . --- .--- . --- .--- . --- .--- . --- .--- . --- .
\subsection{\bbca}
% --- . --- .--- . --- .--- . --- .--- . --- .--- . --- .

The \bbca primitive is an abortable variation of \cbc.
Similar to \cbc, \bbca has a dedicated sender which \primitive-broadcasts messages
and no two correct nodes \primitive-\complete different messages for the same \bbca instance.

Informally, inside \bbca, $f+1$ correct nodes must become locked on a unique message $m$ before \primitive-\complete is possible at any node.
\prim adds an interactive \Complete-Adopt interface to facilitate view changes in partial synchrony which allows nodes to actively probe at any time regardless of their progress in \cbc.
Probing prevents a node from completing the delivery, resulting in one of two possible return values: \textsc{\primitive-adopt} for some locked message $m$
 or \textsc{\primitive-noadopt}, with the following guarantee:
\paragraph{The Complete-Adopt Invariant.}
If any correct node $\bbcacomplete{\bid}{m}{\cert_m}$, then at least $f+1$ correct nodes
return $\bbcaadopt{\bid}{m}{\cert_m}$ upon probing.
In other words, if $f+1$ probes by correct nodes return $\bbcanoadopt{\bid}$, $\bbcacomplete{\bid}{m}{\cert_m}$ never occurs at any correct node.
\\

We implement \bbca via an all-to-all regime \`a la Bracha broadcast~\cite{BrachaT85}.
In this form, \bbca takes $3$ network trips.
To further improve scalability, all-to-all communication may take place over a peer-to-peer gossip transport.
Furthermore, acknowledgements may be batched and aggregated.

The table below summarizes the number of network trips in \bbca compared with other forms of broadcast.
In all forms, the underlying communication may be carried either over all-to-all authenticated channels, or in a linear regime relayed through a sender who aggregates signatures.

\begin{table}[h!]
\small
\centering
\begin{tabular}[width=\textwidth]{r|c|c}
%\hline
Broadcast scheme & all-to-all & linear \\ \hline \hline
Best Effort Broadcast (BEB) & 1 & 1 \\ \hline
Byzantine Consistent Broadcast (\cbc) & 2 & 3 \\ \hline
\bbca & 3 & 5 \\% \hline
\end{tabular}
\vspace{0.2cm}
\caption{Network trips in various broadcast schemes.}
    \label{tab:bcast-latency}
\end{table}

% --- . --- .--- . --- .--- . --- .--- . --- .--- . --- .
\subsection{\system}
% --- . --- .--- . --- .--- . --- .--- . --- .--- . --- .

\begin{figure*}[th]
  \centering
  \includegraphics[width=0.85\textwidth]{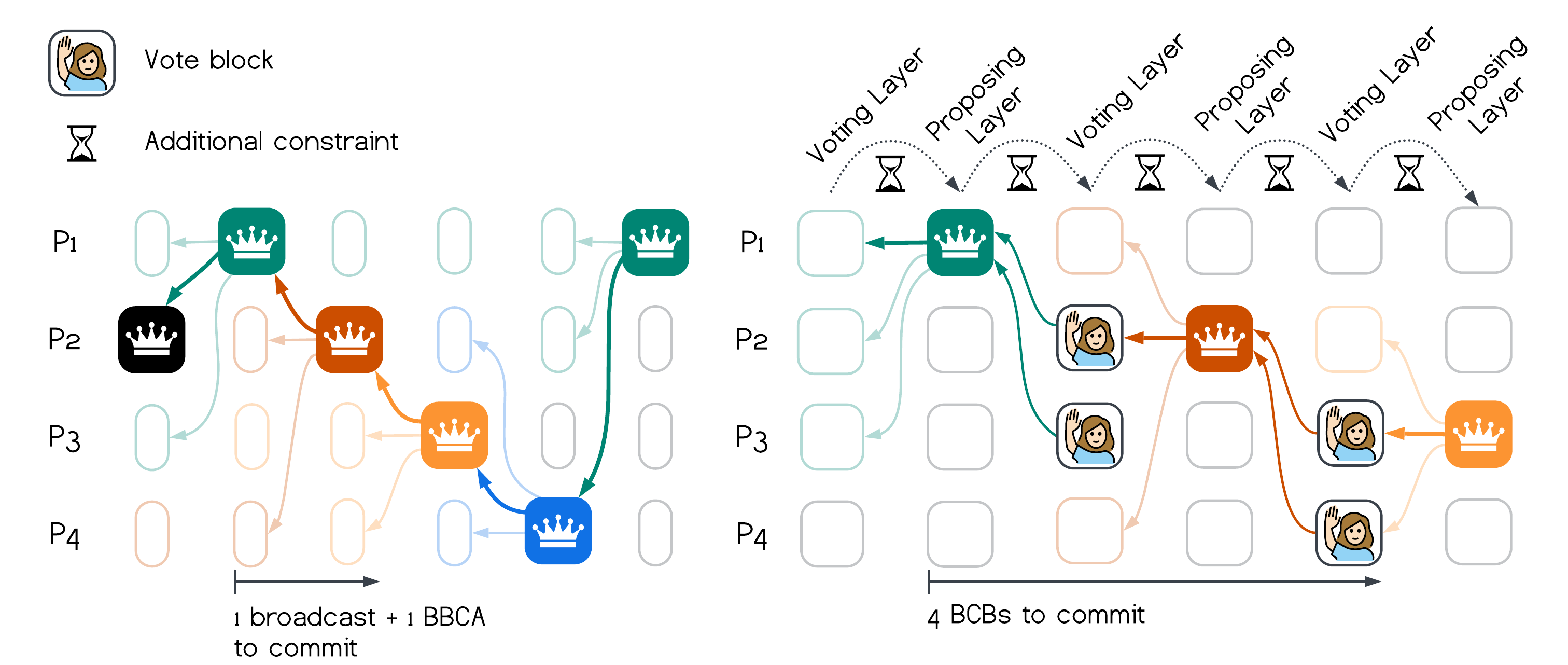}
  \caption{Decision making latency in \system (left) compared with Bullshark (right).}
  \label{fig:BS-N-BBCA}
\end{figure*}

\begin{figure}[h]
\centering
\includegraphics[width=0.9\textwidth]{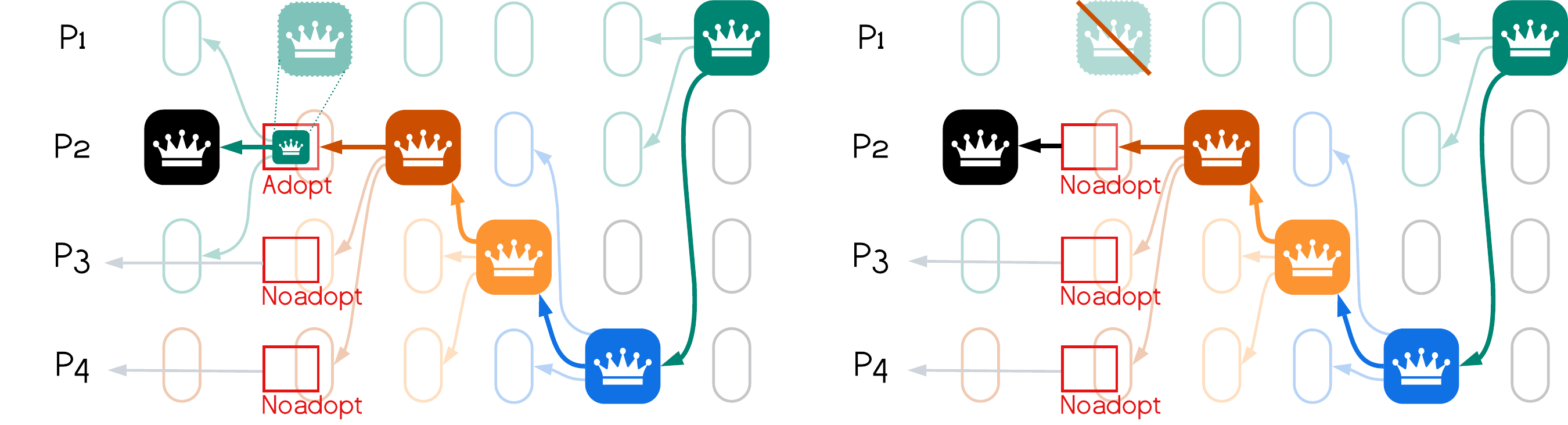}
\caption{When a view timer expires, each node needs to include a \primitive-probe result inside its \newview block. The left DAG depicts an adopted block, and some correct node may have just \primitive-\completed it. The right DAG shows $2f+1$ noadopt responses, proving that no block can be \completed in that view.}
\label{fig:adopt-and-noadopt}
\end{figure}

\system is a view-by-view protocol. Each view reaches a consensus decision in a single \bbca step.
\system forms a DAG of two types of blocks: \leader blocks and \newview blocks. Although there is no artificial layering,
they can be logically grouped by views (as shown on the left of Figure~\ref{fig:BS-N-BBCA}).
Both types of blocks may carry transaction payload.

A transition to view $v$ occurs either upon \primitive-\complete of the \leader block of view $v-1$, or when the new-view timer of view $v-1$ expires.
In the second case,  a node invokes \textsc{\bbca-probe} locally, and then embeds the return value in its \newview block.
A \newview block must include one of the following three items:
a \primitive-\completed \leader block from view $v-1$, an adopted \leader block from probing, or a signed $\textsc{\primitive-noadopt}$ result along with the highest \completed block locally known.
Before entering view $v$, each node other than the leader broadcasts a \newview block, and all nodes start their new-view timers.
The leader of view $v$ \primitive-broadcasts a \leader block. The block refers to the \completed or adopted \leader block of view $v-1$, if available\footnote{As
a technical matter, note that the code in Figure~\ref{code:bbca-chain} encapsulates these references in a \newview block which the leader can embed inside a \leader block.},
otherwise it references $2f+1$ blocks with signed \textsc{\primitive-noadopt}.

The left graph of Figure~\ref{fig:BS-N-BBCA} shows the fast, \emph{commit path}, when the leader has \primitive-\completed the \leader block of the previous view.
Figure~\ref{fig:adopt-and-noadopt} illustrates the \emph{adopt path}, with two different outcomes from nodes' probing results.
The key observation is that the \bbca \Complete-Adopt scheme guarantees that the fast and adopt paths are consistent.%

\bbca allows \leader blocks to become committed by themselves, without any additional votes or DAG layering.
This single-broadcast commit decision accomplishes a substantial reduction in latency to reach consensus, and also significantly simplifies the logic.

Non-\leader blocks are broadcast in parallel using
a Best-Effort Broadcast where a node just directly sends a block to all nodes.  These blocks become committed by being causally referenced by some blocks on the backbone, similar to ``uncle blocks'' in Ethereum.
Since the local DAG of different nodes may evolve at different pace, the \Complete-Adopt feature of \bbca preserves the safety of committed \leader blocks.
\primitive-\completed \leader blocks suffice to uniquely determine which non-\leader blocks to include in the total ordering.

It is worth noting that
previous solutions like Bullshark need non-\leader blocks to vote in the consensus protocol, hence to prevent equivocation, these blocks are broadcast using \cbc.
Because \system gets rid of voting blocks altogether,
there is no need to use \cbc for non-\leader blocks, and this change allows further reduction in latency.
While \cbc is not required for correctness, it may be used for other reasons (e.g.,
to avoid keeping equivocating blocks in the DAG).
Even if \cbc is used for non-\leader blocks, \system exhibits
an overall substantially lower latency than existing DAG-based protocols.

Overall, \system accomplishes a substantial reduction in latency over state-of-art DAG protocols.
Figure~\ref{fig:BS-N-BBCA} illustrates side-by-side the time to commit \leader blocks in \system (left) compared with a DAG-based solution like Bullshark (right).
Hourglasses on the right side denote additional constraints posed on advancing layers by Bullshark.

\section{The \bbca Primitive}
\label{sec:bbca}
%  —--* —--* —--* —--* —--* —--* —--* —--*

\system invokes \primitive{}%
\footnote{Pronounced `Bab-Ka’, and loosely stands for Byzantine Abortable Broadcast with \Complete-Adopt.}
broadcast primitive, an abortable variant of \cbc, for multiple times.
Each \bbca instance has a unique identifier $\bid$ to be totally isolated in its progress.
who can invoke $\Call{\primitive-broadcast}{\bid, m}$, to broadcast message $m$ for instance $\bid$.
Eventually, the node may $\Call{\primitive-\complete}{bid, m, \cert_{m}}$.
We say for short that the node \primitive-\completes $m$ and $\cert_{m}$ is a certificate to convince any node that $m$ has been \completed.
A node may also abort a \bbca instance and probe its state at any time by invoking $\Call{\primitive-probe}{\bid}$.
This invocation returns in two possible ways.
The \bbca instance returns $\bbcaadopt{\bid}{m}{\cert_m}$ for some message $m$ with a certificate $\cert_{m}$,
to certify that some correct node might \complete $m$ and no correct node could \complete a message $m'\neq m$.
Otherwise, \bbca returns $\bbcanoadopt{\bid}$.
In the former case, we say in short that the node \primitive-adopts $m$,
otherwise, we say that it does not \primitive-adopt any message.
\bbca{} maintains the following properties:
\begin{description}
    \item[Validity.] If a correct sender \primitive-broadcasts a message $m$ for instance $\bid$,
        and no correct node probes $\bid$, eventually every correct node \primitive-\completes $m$.

    \item[Consistency.] If a correct node either \primitive-adopts or \primitive-\completes a message $m$
	for instance $\bid$,
    and some other correct node \primitive-\completes message $m'$ for $\bid$,
	then $m=m'$.
    \item[\Complete-Adopt.] If a message $m$ is ever \primitive-\completed by some correct node
	for instance $\bid$,
    then at least $f+1$ correct nodes get $\bbcaadopt{\bid}{m}{\cert_m}$ for $\bid$, if they invoke $\Call{\primitive-probe}{\bid}$.
    That is, if $f+1$ correct nodes get $\bbcanoadopt{\bid}$,
    then no correct node ever \primitive-\completes any message for instance $\bid$.
    \item[Integrity.] If a correct node \primitive-adopts or \primitive-\completes a message $m$ for instance $\bid$
    and the sender $p$ is correct, then $p$ has previously invoked $\Call{\primitive-broadcast}{\bid, m}$.
\end{description}

\bbca is intialized with an external validity predicate $\mathcal{P}$,
such that a correct node only \primitive-\completes $m$ if $\mathcal{P}(m)$ is true.

The pseudocode in Figure~\ref{code:bbca-broadcast} implements \bbca by slightly modifying the Bracha broadcast protocol~\cite{BrachaT85}.
Upon sender's invocation of \textsc{\primitive-broadcast}, it broadcasts $\initMsg{\bid}{m}$.
For a node $p$, upon receiving $\initMsg{\bid}{m}$ from the sender of instance $\bid$,
it checks that validity condition $\mathcal{P}(m)$ holds.
If $m$ is valid, $p$ broadcasts $\echoMsg{\bid}{m}_{\sigma_p}$, where $\sigma_p$ is the signature of $p$ on  $\echoMsg{\bid}{m}$,
and becomes \emph{initialized},
such that it will no longer be initialized again.

Upon receiving an $\textsc{Echo}$ message from $2f+1$ distinct nodes for $\bid$ and $m$ with valid signatures,
$p$ becomes \emph{ready}, \emph{unless} $p$ has already probed instance $\bid$.
Then it broadcasts a signed message $\readyMsg{\bid}{m}_{\sigma_p}$.
The $2f+1$ signatures on $\echoMsg{\bid}{m}$ certify a node has sent a \textsc{ready} message for $m$.

Upon receiving a valid $\textsc{Ready}$ message from $2f+1$ distinct nodes for $\bid$ and $m$,
and, \primitive-\completes $m$ in instance $\bid$.
The $2f+1$ signatures on $\readyMsg{\bid}{m}$ certify a node has completed $m$.

When probed in instance $\bid$, if $p$ has already become ready for $m$, it returns $\bbcaadopt{\bid}{m}{\cert_{m}}$ where $\cert_m$ is a collection of
$2f+1$ signed \textsc{echo} messages. Otherwise, it returns $\bbcanoadopt{\bid}$.

% BBCA primitive
\begin{multicols}{2}
\begin{algorithmic}[1]
    \small
    \Init%
        \InlineComment{$\bid$ is a unique identifier for a BBCA-broadcast instance, it contains $\bid.\sender$ indicating the designated sender of the instance}
        \State$\bbcaState[\bid].\pendingMsgs \gets \{\}$
        \InlineComment{$\pendingMsgs[m].\nEcho \gets \{\}$ and}
        \InlineComment{$\pendingMsgs[m].\nReady \gets \{\}$}
        \InlineComment{when message $m$ is first seen.}
        \State$\forall p \in \allprocesses, \bbcaState[\bid].\receivedEcho[p] \gets \false$
        \State$\forall p \in \allprocesses, \bbcaState[\bid].\receivedReady[p] \gets \false$
        \State$\bbcaState[\bid].\ready \gets \false$
        \State$\bbcaState[\bid].\echo \gets \false$
        \newinbbca{\State$\bbcaState[\bid].\abort \gets \false$}
    \EndInit%

    \Interface{\textsc{BBCA-broadcast}}{\bid, m}
        \State$\Call{becomesInitialized}{\bid, m}$
        \State$\broadcast{\initMsg{\bid}{m}}$
    \EndInterface%

    \newinbbca{%
    \Interface{\textsc{BBCA-probe}}{\bid}
        \State$\letvar{\bstate}{\bbcaState[\bid]}$
        \State$\letvar{\vstates}{\bstate.\pendingMsgs}$
        \If{$\exists m: |\vstates[m].\nEcho| \ge 2f + 1$}%\,\,\lor$\\
            \State\Return$\bbcaadopt{\bid}{m}{\vstates[m].\nEcho}$
        \Else%
            \State$\bstate.\abort \gets \true$
            \State\Return$\bbcanoadopt{\bid}$
        \EndIf%
    \EndInterface%
    }

    \Procedure{\textsc{becomesInitialized}}{\bid, m}
        \State$\bbcaState[\bid].\echo \gets \true$
        \State$\broadcast{\echoMsg{\bid}{m}_{\sigma_p}}$
    \EndProcedure%

    \Procedure{\textsc{becomesReady}}{\bid, m}
        \If{$\newinbbca{!\bbcaState[\bid].\abort}$}
            \State$\bbcaState[\bid].\ready \gets \true$
            \State$\broadcast{\readyMsg{\bid}{m}}$
        \EndIf%
    \EndProcedure%

    \UponMsg{\initMsg{\bid}{m \neq\bot}}{q}
        \State$\letvar{\bstate}{\bbcaState[\bid]}$
        \If{$\newinbbca{\mathcal{P}(m)} \land q = \bid.\sender  \land (!\bstate.\echo)$}
            \State\Call{becomesInitialized}{$\bid, m$}
        \EndIf%
    \EndUponMsg%

    \UponMsg{\echoMsg{\bid}{m \neq\bot}_{\sigma_q}}{q}
        \State$\letvar{\bstate}{\bbcaState[\bid]}$
        \If{$\newinbbca{\mathcal{P}(m)} \land (!\bstate.\receivedEcho[q])$}
            \State$\bstate.\receivedEcho[q] \gets \true$
            \State$\letvar{\vstate}{\bstate.\pendingMsgs[m]}$
            \State$\vstate.\nEcho \gets \vstate.\nEcho \cup \{\sigma_q\}$
            \If{$!\bstate.\ready \land |\vstate.\nEcho| = 2f + 1$}
                \State\Call{becomesReady}{$\bid, m$}
            \EndIf%
        \EndIf%
    \EndUponMsg%

    \UponMsg{\readyMsg{\bid}{m \neq\bot}_{\sigma_q}}{q}
        \State$\letvar{\bstate}{\bbcaState[\bid]}$
        \If{$\newinbbca{\mathcal{P}(m)} \land (!\bstate.\receivedReady[q])$}
            \State$\bstate.\receivedReady[q] \gets \true$
            \State$\letvar{\vstate}{\bstate.\pendingMsgs[m]}$
            \State$\vstate.\nReady \gets \vstate.\nReady \cup \{\sigma_q\}$
            \If{\removedinbbca{\sout{$!\bstate.\ready \land |\vstate.\nReady| = f + 1$}}}
            \State\removedinbbca{\sout{\Call{becomesReady}{$\bid, m$}}}
            \EndIf%
            \If{$|\vstate.\nReady| = 2f + 1$}
                \State\Call{BBCA-\complete}{$\bid, m, \vstate.\nReady$}
            \EndIf%
        \EndIf%
    \EndUponMsg%

    \captionof{figure}{BBCA-broadcast primitive with a validity predicate $\mathcal{P}$ based on a all-to-all Bracha broadcast protocol. The blue text shows the simple tweak that is added to the vanilla Bracha to enable the \Complete-or-Adopt scheme, whereas red text shows the part that is removed.}%
    \label{code:bbca-broadcast}
\end{algorithmic}
\end{multicols}

For completeness, we mention that another approach for implementation is a linear regime \`a la Cachin et al.~\cite{cachin2000random}. 
In a linear form, \bbca takes $5$ network trips, while the probing logic is similar (return the adopted $m$ or promise not to sign the second acknowledgement):

\begin{enumerate}

\item
The designated sender broadcasts the block to all nodes.

\item
After the validation of the block, nodes respond with a signed acknowledgement over the block's digest.

\item
The sender broadcasts an aggregated/concatenated proof of $2f+1$ signatures.

\item
Nodes respond to the proof from the sender with a signed acknowledgement over the proof.

\item
The sender broadcasts an aggregate/concatenated proof of signatures, and nodes \primitive-\complete it upon receipt.

\end{enumerate}
\paragraph{Relationship to Other Broadcasts}
Bracha's ``all-to-all'' broadcast protocol is not only a \cbc but also a Byzantine Reliable
Broadcast (\rbc), because it additionally guarantees reliable delivery of the broadcast
message. Since \bbca is abortable, reliability is not
guaranteed in the presence of a faulty sender or network delays.
However, when used in the construction of \system as we show later,
reliability is gained for committed blocks through chaining: after seeing a \primitive-\complete certificate for a block, later commits will causally reference the block itself or \primitive-adopt it.

Since we forego reliability for \bbca, we remove the extra trip shown in red
text from the original Bracha protocol (Figure~\ref{code:bbca-broadcast}). Requiring a weaker broadcast primitive
also allows the alternative linear form to preserve linearity, otherwise
all-to-all broadcast would be required on top ``linear'' \cbc in order to guarantee reliability.
Therefore, \bbca makes a weaker, more basic
primitive than \rbc and reduces latency/message-complexity.

\section{The \system Protocol}
\label{sec:bbca-chain}
%  —--* —--* —--* —--* —--* —--* —--* —--*
\subsection{Model}\label{sec:model}
We assume a set of $n$ nodes which communicate via point-to-point authenticated channels
such that up to $f=\frac{n-1}{3}$ of them may fail arbitrarily.
We assume partially synchronous communication such that network delays are unbounded until some unknown Global Stabilization Time (GST).
After GST, network delays are bounded by constant $\Delta$. But because GST is unknown, the safety of the protocol cannot rely on $\Delta$.

\subsection{Best-Effort Broadcast with Causal-Ordering}
%  —--* —--* —--* —--* —--* —--* —--* —--*
Throughput the paper, we use the term \emph{broadcast} to denote a
\emph{best-effort broadcast} where a node sends the block to all other nodes.
Moreover, we say that block $B_1$ \emph{causally precedes} $B_2$, or that $B_1$ is a \emph{causal predecessor} of $B_2$,
to denote that $B_2$ includes a \emph{causal reference} to block $B_1$.
Such reference could be implemented simply as a hash of $B_1$.
Causal ordering guarantees that if $B_2$ causally references $B_1$, then no correct node delivers $B_2$ before $B_1$.
Throughout the paper we always assume that nodes causally reference their own previously broadcast block.

\subsection{Consensus}\label{sec:blockchain}
%  —--* —--* —--* —--* —--* —--* —--* —--*
In a consensus protocol, each correct node may take some payload (e.g., transactions), seal it in a block, and send it to the network. A \textbf{commit} statement is triggered to sequence blocks
in their final order, with the following properties:
\begin{description}
    \item[Consistency.] For the sequences of \textbf{committed} blocks by any two nodes, one has to be the prefix of another.
    \item[Chain growth.] The sequence of \textbf{committed} blocks keeps growing for all correct nodes.
    \item[Censorship resistance.] After GST, if a correct node sends a block $b$ to the network, then $b$ is eventually \textbf{committed}.
\end{description}

\system commits blocks in \emph{views} and in each view $v$,
one node acts as \emph{leader},
evaluated by each node via a $\Call{getProposer}{v}$ function.
For simplicity, throughout the paper we assume a round-robin rotation,
though more elaborate and/or adaptive schemes may be used.
Nodes enter a new view with a \emph{view synchronization} protocol, which we discuss in more detail in Section~\ref{sec:practical}.
Each node, upon entering a new view,
starts a view timer and starts
participating in a new \bbca instance (identified as $\bid = \langle \Call{getProposer}{v}, v \rangle$).
Nodes other than the leader broadcast a \emph{\newview} block.
The leader, however, proposes a \emph{\leader block} $B_v$ by invoking $\Call{\primitive-broadcast}{\bid, B_v}$.

In the common case, a node completes view $v$ and is ready to enter view $v+1$ upon it \primitive-\completes the \leader block of view $v$. The block that has completed
gets marked as the final block for $v$ and the protocol guarantees that every correct node eventually also commits this block.
In this case, the \newview block for view $v+1$ includes
(1) a causal reference to the completed \leader block $B_{v}$,
(2) the \primitive-\complete certificate $\cert_{B_v}$ proving that $p$ completed $B_{v}$,
(3) the view number $v$, and (optionally) a payload of pending transactions and causal references to help commit some additional previous blocks.
For a node that is the leader of $v+1$, instead of broadcasting this \newview{} block, it just embeds (or causally references to) it when it propose in view $v+1$.

If at some node $p$ the timer for view $v$ expires before a \primitive-\complete of \leader block for view $v$,
then $p$ \emph{probes} the \bbca instance of view $v$ and finishes view $v$.
This forces the \bbca instance to stop sending ready messages internally.
If there exists a node that could \primitive-\complete a \leader block $B_v$,
then \bbca returns $\langle\textsc{\primitive-adopt}, B_v, \cert_{B_v}\rangle$,
for at least $f+1$ correct nodes
where $\cert_{B_v}$ is the corresponding adopt certificate.
In this case, the \newview block for view $v+1$ includes
(1) a causal reference to the adopted \leader block $B_{v}$,
(2) the adopt certificate $\cert_{B_v}$ proving that $p$ adopted $B_{v}$ (i.e., no correct node could complete any other block),
(3) the view number $v$, and like the previous case, optional payload and causal references.

For the rest of the nodes, \bbca may return $\langle\textsc{\primitive-noadopt}\rangle$.
In this case the \newview block which $p$ broadcasts includes
(1) a signed tuple $\langle\textsc{noadopt},v\rangle_{\sigma_p}$,
(2) the view number $v$, and like the previous cases, optional payload and causal references.

The \leader block for view $v+1$
includes
(1) a causal reference to $1$ or $2f+1$ \newview blocks, depending on how the leader enters the view,
(2) the new view number $v+1$, and like \newview blocks, possibly some additional payload.
In the common case, the leader includes a causal reference to a \newview block which references the \leader block of view $v$
which the leader of view $v$ has committed.
Notice that in this case the leader can \emph{embed} its own \newview block,
and, hence, does not have to wait for other \newview blocks to arrive.
Otherwise, the leader waits for \newview blocks of other nodes to justify what happened in view $v$.
If the leader receives a \newview block $B_n$ with a valid adopt certificate $\cert_{B_v}$ for the \leader block $B_v$ of view $v$,
the leader includes a causal reference to $B_n$.
Otherwise, the leader waits for and causally references $2f+1$ \newview blocks from distinct nodes with a valid signature on the tuple $\langle\textsc{noadopt}, v\rangle$.
These constitute a justification that no node could commit a \leader block in view $v$ and thus it is safe to skip.

Each node maintains local state for \emph{finalized} views,
i.e., each view $v$ for which the node has a valid complete certificate for a \leader block,
a valid adopt certificate for a \leader block causally followed by a completed \leader block,
or $2f+1$ distinct valid signatures on $\langle\textsc{noadopt}, v\rangle$  causally followed by a completed \leader block.
Upon receiving a \newview block, including their own, with a causal reference to a \leader block $B_v$ for view $v$ and a valid complete certificate for $B_v$,
a node $p$ finalizes view $v$ with $B_v$.
Moreover, $p$ recursively finalizes the non-finalized views in $B_v$'s causal history as follows:
a view $v'$ is finalized with $B_{v'}$ if there exists a complete or adopt certificate for $B_{v'}$,
otherwise $v'$ is finalized with a special $\textsc{no-op}$ value to indicate that view $v'$ should simply be skipped.

Additionally, the node maintains a pointer for the highest committed view $\lastCommitted$.
Upon finalizing some view $v$, $p$ commits \leader blocks for all contiguous finalized views, starting from $\lastCommitted + 1$, while skipping views finalized with $\textsc{no-op}$.
This ensures that nodes commit \leader blocks in increasing view numbers with consistent committed prefixes.
Along each \leader block, $p$ commits to the total order of all remaining causally referenced blocks including
\newview blocks according
to some deterministic rule.
Note that by the causality of broadcast, a node that commits the \leader block already has all the blocks referenced by the \leader block available.

The \system protocol is presented as pseudocode in Figure~\ref{code:bbca-chain}.

\begin{multicols}{2}
\begin{algorithmic}[1]
    \small
    \Init%
        \State$\forall v \in \mathbb{N}, \newviewblocks[v] \gets \bot$
        \InlineComment{$\newviewblocks[v]$ holds the set of received new-view blocks in view $v$, $\newviewblocks[v][q]$ is the block from $q$}
        \State$\forall v \in \mathbb{N}, \finalized[v] \gets \bot$
        \InlineComment{$\finalized[v]$ holds the finalized block for view $v$, a continuously committed prefix is the committed chain.}
        \State$\lastCommitted \gets 0$
        \State$\newviewblocks[0] \gets \newViewBlock($\\
        \hspace{2em}$\blockref = B_\mathit{genesis}, \mathit{\completeCert} = \cert_0, \view = 0)$
        \State$\view \gets 0$
        \State$\viewTimer.\mathit{init}()$
        \InlineComment{start the view advancing loop}
        \State$\Call{enterView}{1}$
    \EndInit%
    \Procedure{\textsc{enterView}}{v}
        \State$\view \gets v$
        \State$\viewTimer.\mathit{restart}()$
        \State$\letvar{\mathit{proposer}}{\Call{getProposer}{\view}}$
        \State{}initialize BBCA instance with $\bid = \langle\mathit{proposer}, \view\rangle$
	    \InlineComment{Nothing for non-leader to do until some event received}
        \State\ifthen{$\mathit{proposer} \neq p$}{\Return}
            \State$\letvar{C}{\newviewblocks[\view - 1]}$
        \State$\letvar{\mathit{\complete}}{\exists q: C[q].\mathit{\completeCert} \neq \bot\,\,\land}$\\
            \hspace{9em}$C[q].\blockref.\view = \view - 1$
            \State$\letvar{\mathit{adopt}}{\exists q: C[q].\mathit{adoptCert} \neq \bot}$
            \State$\letvar{\mathit{noadopt}}{\exists Q: |Q| = 2f + 1\,\,\land}$\\
            \hspace{4em}$(\forall r \in Q: C[r].\sig = \langle \textsc{noadopt}, \view - 1 \rangle_{\sigma_r})$
            \State{}\textbf{wait until} $\mathit{\complete} \lor \mathit{adopt} \lor \mathit{noadopt} = \true$
            \If{$\mathit{\complete}$}
                \State$\prev \gets \langle \committed = C[q] \rangle$
            \ElsIf{$\mathit{adopt}$}
                \State$\prev \gets \langle \adopted = C[q] \rangle$
            \Else%
                \State$\prev \gets \langle \skipped = \{ C[r]: \forall r \in Q \}  \rangle$
            \EndIf%
        \State$B \gets \mathit{backboneBlock}(\blockref = \prev, \view = \view)$
        \State$\Call{BBCA-broadcast}{\langle \sender = p, \view = \view \rangle, B}$
    \EndProcedure%

    \Upon{\Call{\prim-complete}{\bid, B, \cert_B}}
        \State$\Call{tryCommit}{B}$
        \State\ifthen{$B.\view < \view$}{\Return}
        \State$\letvar{B_n}{\newViewBlock(}$\\
        \hspace{2em}$\blockref = B, \mathit{\completeCert} = \cert_B, \view = B.\view)$
            \If{$\Call{getProposer}{B.\view + 1} = p$}
                \InlineComment{$p$ is the proposer for the next view, thus}
                \InlineComment{\textsc{enterView} will embed $B_n$ in the proposal.}
                \State$\newviewblocks[B_n.\view][p] \gets B_n$
            \Else%
                \InlineComment{Otherwise, broadcast the \newview block.}
                \State$\textsc{broadcast}(B_n)$
            \EndIf%
            \State$\Call{enterView}{B.\view + 1}$
    \EndUpon%
    \UponCondition{\viewTimer.\mathit{elapsed} > \mathit{T_{max}}}
        \If{$\langle \textsc{BBCA-adopt}, \bid, B, \cert_{B} \rangle = \Call{BBCA-probe}{\bid}$}
        \State$\textsc{broadcast}(\newViewBlock($\\
        \hspace{4em}$\blockref = B, \mathit{adoptCert} = \cert_B, \view = \view))$
        \Else%
            \State$\textsc{broadcast}(\newViewBlock($\\
            \hspace{4em}$\sig = \langle \textsc{noadopt}, \view \rangle_{\sigma_p},$\\
            \hspace{4em}$\view = \view))$
            \InlineComment{Implicitly causally refers to the last completed or adopted block in the causal history of $B$.}
        \EndIf%
        \State$\Call{enterView}{\view + 1}$
    \EndUponCondition%

    \UponReceivingX{newViewBlock}{B_n}{q}
        \If{$\newviewblocks[B_n.\view][q] = \bot$}
        \State$\newviewblocks[B_n.\view][q] \gets B_n$
        \State\ifthen{$B_n.\mathit{\completeCert} \neq \bot$}{$\Call{tryCommit}{B_n.\blockref}$}
        \EndIf%
    \EndUponReceivingX%

    \Procedure{\textsc{tryCommit}}{B}
        \State$\Call{finalize}{B}$
        \While{\finalized[\lastCommitted+1] \neq\bot}
            \State$\lastCommitted \gets \lastCommitted + 1$
            \If{$\finalized[\lastCommitted] \neq \textsc{no-op}$}
            \State$\textbf{commit}\,\finalized[\lastCommitted]$
            \EndIf%
        \EndWhile%
    \EndProcedure%

    \Procedure{\textsc{finalize}}{B}
    \State\ifthen{$\finalized[B.\view] \neq \bot$}{\Return}
    \State$\finalized[B.\view] \gets {B}$

    \State$B_{\max} \gets$ the causal predecessor of $B$ of the highest view with an adopt or complete certificate
        \ForAll{i}{[B_{max}.\view+1,{}B.\view)}
        	\State$\finalized[i] \gets \textsc{no-op}$
        \EndForAll
    \State$\Call{finalize}{B_{\max}}$
    \EndProcedure%

    \captionof{figure}{BBCA-Chain operational logic for node $p$. Block validation is omitted in the code for clarity. $\mathit{backboneBlock}()$
        and $\mathit{newViewBlock}()$ construct backbone and new-view blocks respectively. During the block construction, actual payload (e.g., transactions from the mempool) and causal references to the uncommitted frontier can be added to causally commit useful transactions according to the consistent ordering on the view-by-view backbone determined by $\textbf{commit}$.}%
    \label{code:bbca-chain}
\end{algorithmic}
\end{multicols}

\section{Further Discussion}
\label{sec:practical}

\paragraph{Simplifying further.}
We can change the protocol to avoid requiring nodes to broadcast \newview blocks.
Adopting the principles of HotStuff-2~\cite{malkhi2023hotstuff},
nodes will send \emph{\newview messages} directly to the leader of the next view.
The view-synchronization would be handled using some protocol outside the DAG (see the related works in Section~\ref{sec:related}).

Briefly, this modification works as follows.
The leader of view $v$ waits for the earliest of the following conditions in order to form a proposal:

\begin{itemize}
\item a commit certificate for $B_{v-1}$ is received.
\item an adopt certificate has been received.
\item \newview messages on view $v-1$ have been received from all nodes.
\item the view $v-1$ timer has expired, potentially with an additional time to allow one network trip.
\end{itemize}

Other nodes guard the safety of \leader{} proposals by comparing with their own highest certified \leader block,
instead of looking at the proposal causal history. That is,
the predicate $\mathcal{P}(B_v)$ nodes use before they accept a \leader block in \primitive{} is that
the block references a \leader block (either as commit or an adopt certificate) at least as high as the node's highest adopt/commit certificate.

Briefly, after GST,
in case the leader of view $v$ does not obtain a certificate for a \leader block of view $v-1$, then it is guaranteed that no correct node holds a commit or adopt certificate for view $v-1$.
Before GST, the leader might not receive in time the highest \leader block certificate in the system.
In case there is any node with a higher \leader block certificate than proposed, then this node may indeed reject the leader proposal because it will not pass the $\mathcal{P}(B_v)$ predicate.
This is ok, because safety is still ensured by BBCA, while liveness can only be guaranteed after GST.

\paragraph{Network utilization.}
As discussed in Section~\ref{sec:bbca-chain} \leader blocks and \newview blocks
may include a payload, i.e., transactions or a bundle of certificates for available transactions, assuming a separate data dissemination and availability layer.
However, \system design allows nodes to broadcast \regular blocks (with payload) independently in parallel, at any time, and orthogonally to view progression.
Therefore, unlike protocols based on layered DAGs, nodes with high load and high bandwidth capacity  can broadcast \regular blocks more frequently.
Moreover, nodes can continue broadcasting \regular blocks, even when view progression is slow, in the presence of a network partition or a slow/Byzantine leader.
This allows nodes to stay busy continuously and truly saturate the network capacity.

\paragraph{View synchronization.}
In BFT protocols, view synchronization ensures that all correct nodes enter a new view within bounded time.
There exist several recent advancements in literature which could be used by \system~\cite{Civit2022ByzantineCI,LewisPye2022QuadraticWM}.
However, we can enhance \system itself to implement view synchonization with few simple rules.
In the common case, a node completes view $v$ and enters view $v+1$ upon it \primitive-\completes the \leader block of view $v$.
Upon receiving a \newview block $B_n$ for view $v+1$, which references a completed \leader block, the node completes view $v$ and enters view $v+1$.
Similarly, if $B_n$ references an adopted \leader block, the node adopts the \leader block, and enters view $v+1$.
Otherwise, a node that encounters the timeout probes \bbca and gets \textsc{\primitive-noadopt}, does not enter view $v+1$ unless it receives $2f+1$ \newview blocks.
Additionally, a node that receives $f+1$ \newview blocks for view $v+1$ with a valid $\langle \textsc{noadopt}, v+1\rangle$ signature, probes its \bbca instance for view $v$, completes view $v$ and enters view $v+1$.
In the last case, $f+1$ blocks are necessary to ensure that adversaries cannot force a correct node to complete a view.

\section{\bbca Implementation Correctness}
In this section  we prove that the specification of \bbca in Algorithm~\ref{code:bbca-broadcast} satisfies \bbca properties described in Section~\ref{sec:bbca}.

\begin{lemma}[Validty]
\label{lem:bbca-validity}
    If a correct sender \primitive-broadcasts a message $m$ for instance $\bid$,
    and no correct node probes $\bid$, eventually every correct node \primitive-\completes $m$.
\end{lemma}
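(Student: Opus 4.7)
The plan is to trace the messages through the three phases of the modified Bracha protocol, using the two hypotheses (correct sender, no correct node probes) to rule out the only ways the flow could stall. Implicitly I will assume, as is standard for externally-validated broadcast, that a correct sender only invokes $\Call{\primitive-broadcast}{\bid, m}$ on messages with $\mathcal{P}(m) = \true$, so that validation at other correct nodes succeeds.

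First I would handle the Init phase. Since the sender is correct, it executes \textsc{becomesInitialized} and broadcasts $\initMsg{\bid}{m}$ to everyone. Point-to-point authenticated channels guarantee that each of the at least $2f+1$ correct nodes eventually delivers this message. For a correct receiver, $\bbcaState[\bid].\echo$ is initially \false, $q = \bid.\sender$, and $\mathcal{P}(m)$ holds, so the node enters \textsc{becomesInitialized} and broadcasts a signed $\echoMsg{\bid}{m}_{\sigma_p}$.

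Next I would handle the Echo$\to$Ready transition, which is where the hypothesis ``no correct node probes $\bid$'' does real work. Each correct node eventually receives $\echoMsg{\bid}{m}_{\sigma_q}$ from every one of the $\geq 2f+1$ correct nodes, each verifying the signature and $\mathcal{P}(m)$; thus the counter $|\vstate.\nEcho|$ reaches $2f+1$ at every correct node. By hypothesis, no correct node invokes \textsc{BBCA-probe}, so $\bbcaState[\bid].\abort$ remains \false at every correct node, and the guard in \textsc{becomesReady} passes. Hence every correct node broadcasts $\readyMsg{\bid}{m}$. This is the step I would flag as the main obstacle to state precisely, because in the modified protocol the red-struck ``amplification'' branch on $f+1$ Ready messages has been removed, so becoming ready \emph{only} happens through the $2f+1$-Echo path; the proof hinges on observing that the no-probe assumption keeps that path unblocked.

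Finally, I would close the Ready$\to$Complete phase symmetrically: every correct node eventually receives $\readyMsg{\bid}{m}_{\sigma_q}$ from each of the $\geq 2f+1$ correct Ready-broadcasters, each with valid signatures and valid $\mathcal{P}(m)$. The counter $|\vstate.\nReady|$ thus reaches $2f+1$, triggering $\Call{BBCA-complete}{\bid, m, \vstate.\nReady}$ at every correct node, which is the desired conclusion. The remaining bookkeeping is simply noting that each state variable ($\receivedEcho[q]$, $\receivedReady[q]$) is checked to avoid double-counting, and that the initial condition $\echo = \false$ guarantees the Init handler fires exactly once per correct node.
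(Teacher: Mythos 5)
Your proof is correct and follows essentially the same route as the paper's: trace Init~$\to$~Echo~$\to$~Ready~$\to$~Complete, use the correctness of the sender to get $2f+1$ Echo messages everywhere, and use the no-probe hypothesis to keep $\abort$ false so the Ready phase fires. Your added observations (the external-validity assumption on the sender's own message, and that the removed $f+1$-Ready amplification branch makes the no-probe hypothesis load-bearing) are sound elaborations of what the paper leaves implicit.
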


\begin{proof}
    Because the sender is correct, all correct nodes broadcast an $\textsc{Echo}$ message for $m$.
    All correct nodes then receives at least $n-f=2f+1$ $\textsc{Echo}$ messages.
    As no correct node invokes \textsc{\primitive-probe},
    all correct nodes will broadcast a $\textsc{Ready}$ message for $m$.
    Therefore, all correct nodes receive at least $2f+1$ $\textsc{Ready}$ messages and \primitive-\complete $m$.
\end{proof}

\begin{lemma}[Consistency]
\label{lem:bbca-conistency}
    If a correct node either \primitive-adopts or \primitive-\completes a message $m$
    for instance $\bid$,
    and some other correct node \primitive-\completes message $m'$ for $\bid$,
    then $m=m'$.
\end{lemma}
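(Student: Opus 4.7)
The plan is a standard quorum-intersection argument, adapted to exploit the fact that the removed \textsc{Ready}-amplification step (shown in red in Figure~\ref{code:bbca-broadcast}) means a correct node can only become ready through the \textsc{Echo} path. I would argue by contradiction: assume a correct node $p$ \prim-adopts or \prim-\completes $m$ for $\bid$, another correct node $p'$ \prim-\completes $m'$ for $\bid$, and $m\neq m'$. The target contradiction is that this forces too many correct nodes to broadcast \textsc{Echo} for distinct messages, while each correct node broadcasts \textsc{Echo} at most once.

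First I would show the key lemma that adoption or completion of a message $x$ by a correct node implies that at least $f+1$ correct nodes broadcast $\echoMsg{\bid}{x}_{\sigma}$. If a correct node \prim-adopts $x$, then by the code of \textsc{BBCA-probe} it holds $2f+1$ valid \textsc{Echo} signatures for $x$ from distinct senders, and at least $f+1$ of those senders are correct. If a correct node \prim-\completes $x$, then it holds $2f+1$ valid \textsc{Ready} signatures for $x$; at least $f+1$ of those are from correct nodes, and each such correct node called \textsc{becomesReady} for $x$. Since the \textsc{Ready}-amplification clause has been removed, the only remaining trigger for \textsc{becomesReady} is the condition $|\vstate.\nEcho|=2f+1$ inside the \textsc{Echo} handler, so each of those $f+1$ correct nodes had already collected $2f+1$ valid \textsc{Echo} signatures for $x$; peeling off $f$ possibly Byzantine senders, at least $f+1$ correct nodes must themselves have broadcast \textsc{Echo} for $x$.

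Second I would invoke the uniqueness of a correct node's \textsc{Echo}: inspecting the \textsc{Init} handler, $\textsc{becomesInitialized}$ is gated by $!\bstate.\echo$, and $\bstate.\echo$ is set to \true{} inside it; hence each correct node broadcasts \textsc{Echo} for at most one message per instance $\bid$. Applying the first step to both $m$ (via $p$) and $m'$ (via $p'$) yields two sets of at least $f+1$ correct \textsc{Echo}-senders for two distinct messages, which must be disjoint by this uniqueness. This forces the number of correct nodes to be at least $2(f+1)=2f+2$, contradicting $n-f=2f+1$.

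The only subtle point, which I expect to be the main obstacle to state cleanly, is justifying that a \textsc{Ready} from a correct node for $x$ necessarily came through the $2f+1$-\textsc{Echo} path; this is where the deliberate removal of the $f+1$-\textsc{Ready} trigger matters, and I would explicitly cite Figure~\ref{code:bbca-broadcast} and the deletion of that branch to rule out the alternative path. Everything else is a direct quorum-intersection computation using signature-authenticated, per-sender deduplication of \textsc{Echo} and \textsc{Ready} counts.
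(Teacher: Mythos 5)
Your decomposition is sound up to the last step, and in one respect it is more careful than the paper's own proof: the paper asserts that a node which \primitive-\completes has ``received $2f+1$ \textsc{Echo} messages,'' whereas such a node actually holds $2f+1$ \textsc{Ready} signatures, and the \textsc{Echo} quorum is only held indirectly by the correct \textsc{Ready}-senders. Your explicit reduction through \textsc{becomesReady}, together with the observation that the deleted $f+1$-\textsc{Ready} amplification branch is exactly what makes this reduction airtight, fills in that indirection correctly. The echo-uniqueness observation is also right.

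The genuine gap is the final counting step. From two disjoint sets of at least $f+1$ correct \textsc{Echo}-senders you conclude that there are at least $2f+2$ correct nodes and declare this to contradict $n-f=2f+1$. But $n-f$ is only a \emph{lower} bound on the number of correct nodes: the model allows \emph{up to} $f$ failures, so an execution may contain as many as $3f$ or $3f+1$ correct nodes, and $2f+2\le 3f$ already for $f\ge 2$, so no contradiction follows from your count. (The facts you established are true in such an execution; they simply do not bound the correct-node population from above.) The repair is the intersection the paper performs: keep the two full quorums of $2f+1$ distinct \textsc{Echo} signers for $m$ and for $m'$ --- correct or not --- and intersect them inside the universe of $n=3f+1$ nodes to obtain at least $(2f+1)+(2f+1)-(3f+1)=f+1$ common signers, of which at least one is correct and therefore echoed two different messages, contradicting your echo-uniqueness step. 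With that substitution in place of the correct-node count, your proof goes through and coincides with the paper's argument.
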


\begin{proof}
    For either \primitive-adopt or \primitive-\complete case, a correct node $p$ needs to receive at least $2f+1$  $\textsc{Echo}$ messages.
    Let's assume that some other node $q$ \primitive-\completes some message $m'\neq m$ in the instance of $\bid$.
    Then $q$ has also received at least $2f+1$  $\textsc{Echo}$ messages.
    Since $n=3f+1$, among $2(2f+1) = 4f + 2$ messages, there are $f+1$ sent by the same set of nodes to both $p$ and $q$.
    Thus at least one correct node has sent to both $p$ and $q$, leading to a contradiction that a correct node can only echo
    the same $m$.
\end{proof}

\begin{lemma}[\Complete-Adopt]
\label{lem:bbca-complete-adopt}
    If a message $m$ is ever \primitive-\completed by some correct node
    for instance $\bid$,
    then at least $f+1$ correct nodes get $\bbcaadopt{\bid}{m}{\cert_m}$ for $\bid$, if they invoke $\Call{\primitive-probe}{\bid}$.
    That is, if $f+1$ correct nodes return $\bbcanoadopt{\bid}$,
    then no correct node ever \primitive-\completes any message for instance $\bid$.
\end{lemma}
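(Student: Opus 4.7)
The plan is to chase the quorum of Ready messages backward to a quorum of Echo messages, and then observe that the correct senders of those Ready messages retain the evidence needed for a probe to return BBCA-adopt.

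First, I would unpack the hypothesis: if some correct node has invoked \textsc{BBCA-complete}$(\bid, m, \cdot)$, then by inspection of the \textsc{Ready} handler in Figure~\ref{code:bbca-broadcast}, this node collected valid \textsc{Ready} messages for $m$ from $2f+1$ distinct nodes. By the standard quorum argument ($n = 3f+1$), at least $f+1$ of these senders are correct. Call this set $C$.

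Next, I would trace what each node $q \in C$ must have done to send \textsc{Ready} for $m$. Correct nodes only broadcast \textsc{Ready} from inside \textsc{becomesReady}, which is invoked exactly once, at the moment $|\bstate.\pendingMsgs[m].\nEcho| = 2f+1$ (and only if $\bstate.\abort$ is still \false). Thus, at that moment, $q$ had recorded $2f+1$ valid \textsc{Echo} signatures for $m$ in its local state. Crucially, $\nEcho$ is only ever extended (signatures are added to the set, never removed), so from then on $|\bstate.\pendingMsgs[m].\nEcho| \ge 2f+1$ holds at $q$ for all time.

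Now I would examine \textsc{BBCA-probe} at any such $q$. The probe enters the \textbf{if}-branch whenever there exists some $m^*$ with $|\nEcho| \ge 2f+1$ and returns $\bbcaadopt{\bid}{m^*}{\cdot}$. By the previous step such an $m^*$ exists, so $q$ returns BBCA-adopt rather than BBCA-noadopt. This already gives $f+1$ correct adopters, as required. The restatement in the lemma (if $f+1$ correct nodes return BBCA-noadopt then no completion ever occurs) is then the contrapositive applied to the set $C$: a BBCA-noadopt set of size $f+1$ would have to be disjoint from $C$, yet $|C| \ge f+1$ and there are only $n - f = 2f+1$ correct nodes, so disjoint subsets of sizes $f+1$ and $f+1$ would require at least $2f+2$ correct nodes, a contradiction.

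The one subtle point I would be careful with is showing that the message returned by the probe is indeed $m$ and not some equivocating $m' \ne m$. This is where I would cite (or re-derive in one line) the quorum-intersection argument already used in Lemma~\ref{lem:bbca-conistency}: two distinct messages cannot both accumulate $2f+1$ \textsc{Echo} signatures at a correct node, since that would force some correct sender to have echoed two different messages, contradicting the guard that prevents re-initialization in \textsc{becomesInitialized}. Thus the $m^*$ returned by $q$'s probe is uniquely determined to be $m$. This uniqueness step is the only place where the argument goes beyond direct bookkeeping and is the subtlety I would take the most care with in a full write-up.
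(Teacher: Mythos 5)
Your proposal is correct and follows essentially the same route as the paper's proof: both identify the $f+1$ correct senders among the $2f+1$ \textsc{Ready} messages, observe that each of them holds $2f+1$ \textsc{Echo} signatures and hence must return \textsc{\primitive-adopt} on probing, and dispose of the second part by the same pigeonhole count over the $2f+1$ correct nodes. Your added care about why the probed message $m^*$ must equal $m$ is a point the paper leaves implicit, but it does not change the structure of the argument.
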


\begin{proof}
    If some correct node \primitive-\completes a message $m$, then it has received a $\textsc{Ready}$ message from at least $2f+1$ nodes. Among them, at least $f+1$ are correct nodes, who have an adopt certificate (consists of $2f+1$ signed \textsc{Echo} messages) for $m$.
    Therefore they will return \textsc{\primitive-adopt} upon probing.

    For the second part of the lemma, assume that some correct node $p$ \primitive-\completes a message $m$.
    $p$ has received $2f+1$ $\textsc{Ready}$ messages, of which $f+1$ are from the correct nodes.
    For the group of $f+1$ nodes which return \textsc{\primitive-noadopt}, there is at least one node that also has sent the aforementioned \textsc{Ready} message, which is contradictory because a node cannot both send a ready message and returns \textsc{\primitive-noadopt}, in any order (in the pseudocode, probing will mark $\abort \gets \true$, preventing the node from sending \textsc{Ready}; and after sending \textsc{Ready}, probing will always return the adopted message).
\end{proof}

\begin{lemma}[Integrity]
\label{lem:bbca-integrity}
    If a correct node \primitive-adopts or \primitive-\completes a message $m$ for instance $\bid$
    and the sender $p$ is correct, then $p$ has previously invoked $\Call{\primitive-broadcast}{\bid, m}$.
\end{lemma}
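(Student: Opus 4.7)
The plan is to trace back the chain of messages that must have fired for a correct node to reach the \primitive-adopt or \primitive-\complete state, and show that at some link in that chain a correct node must have attested to having seen an \textsc{Init} message from the sender; since the sender is correct, that \textsc{Init} could only have been produced inside a \Call{\primitive-broadcast}{\bid, m} call.

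Concretely, I would begin by invoking the state machine in Figure~\ref{code:bbca-broadcast}. A correct node $r$ that either \primitive-adopts $m$ (via \Call{\primitive-probe}{\bid}) or \primitive-\completes $m$ must, in both cases, hold a set of $2f+1$ signed \textsc{Echo} messages $\echoMsg{\bid}{m}_{\sigma_q}$ from distinct nodes: for \primitive-adopt this is exactly the certificate returned by the probe, and for \primitive-\complete the node must have previously become ready, which required the same $2f+1$-echo threshold. Among these $2f+1$ distinct echoers, at least $f+1$ are correct, and in particular at least one is correct; call this correct node $c$. Because the \textsc{Echo} message is signed, the signature $\sigma_c$ cannot be forged, so $c$ really did execute \Call{becomesInitialized}{\bid, m} and broadcast $\echoMsg{\bid}{m}_{\sigma_c}$.

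Next I would examine the only code path by which a correct node transitions to the initialized state: the handler for $\initMsg{\bid}{m'}$ invokes \Call{becomesInitialized}{\bid, m'} only when the message originates from $\bid.\sender$ and the predicate $\mathcal{P}(m')$ holds. Hence $c$ must have received $\initMsg{\bid}{m}$ from the sender indicated in $\bid$, namely $p$. Since $p$ is assumed correct and authenticated channels prevent spoofing, this \textsc{Init} message was genuinely sent by $p$. Finally, inspecting $p$'s code, the only place a correct sender broadcasts $\initMsg{\bid}{m}$ is inside the \Call{\primitive-broadcast}{\bid, m} interface. Therefore $p$ must have previously invoked $\Call{\primitive-broadcast}{\bid, m}$, which is what was to be shown.

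There is no serious obstacle here: the argument is essentially a backward traversal of the protocol transitions, and the only things to be careful about are (i) treating the \primitive-adopt and \primitive-\complete cases uniformly via the common $2f+1$-echo threshold, and (ii) using unforgeability of signatures to pin the correct-node echo on $c$ rather than on a Byzantine impersonator. Both are immediate from the pseudocode and the standard cryptographic assumptions.
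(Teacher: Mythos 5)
Your proof is correct and takes essentially the same route as the paper's, which is a one-line version of the same backward trace: correct nodes only echo after an \textsc{Init} from the designated sender, and a correct sender only sends \textsc{Init} inside \textsc{BBCA-broadcast}. One minor imprecision worth fixing: a node \primitive-\completes upon receiving $2f+1$ \textsc{Ready} messages without itself necessarily having become ready, so in that case you should route the argument through a correct \textsc{Ready}-sender (who did cross the $2f+1$-echo threshold) rather than asserting the completing node holds the echoes itself; this adds one link to the chain but does not affect the conclusion.
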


\begin{proof}
    Correct nodes only broadcast $\textsc{Echo}$ messages for an $\textsc{Init}$ message from the instance sender, so
    a completed message must have been initially sent in \textsc{Init} message, sent by $p$ upon the invocation of \textsc{\primitive-broadcast}.
\end{proof}
\section{\system Correctness}

\subsection{Safety}
To prove consistency, we need to show that
correct nodes commit the same blocks in the same order.
It suffices to show that correct nodes commit \leader blocks in the same order,
as the order of non \leader blocks is determined by the first \leader block that references them
and some arbitrary deterministic rule.

\begin{lemma}
\label{lem:unique-cert}
There can be at most one unique $B$ per view obtaining a \complete certificate and/or an adopt certificate (for all correct nodes).
\end{lemma}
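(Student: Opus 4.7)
The plan is to prove uniqueness of certified blocks per view by reducing every certificate to a statement about distinct correct nodes that signed \textsc{Echo} messages, and then applying quorum intersection among the correct population. Recall that each view $v$ carries a single BBCA instance $\bid = \langle \Call{getProposer}{v}, v\rangle$, so I only need to argue uniqueness within one BBCA instance.

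First, I would establish a small auxiliary claim: if a block $B$ obtains either a complete certificate or an adopt certificate in the instance $\bid$, then at least $f+1$ correct nodes broadcast a signed \textsc{Echo} message for $B$. For an adopt certificate this is immediate by definition, since it consists of $2f+1$ signed \textsc{Echo} messages for $B$, of which at most $f$ can come from Byzantine nodes. For a complete certificate (which is $2f+1$ signed \textsc{Ready} messages for $B$), I would argue indirectly: at least $f+1$ of those \textsc{Ready} signers are correct, and by inspecting the pseudocode, a correct node sends a \textsc{Ready} for $B$ only from within \textsc{becomesReady}, which is triggered only after receiving $2f+1$ signed \textsc{Echo} messages for $B$. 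Taking any such correct \textsc{Ready} signer, the set of $2f+1$ \textsc{Echo} signatures it collected must itself contain at least $f+1$ signatures from correct nodes, giving the desired conclusion.

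Next, I would finish by contradiction: suppose there exist distinct blocks $B \neq B'$ both obtaining (complete or adopt) certificates in view $v$. Applying the auxiliary claim to each, at least $f+1$ correct nodes signed \textsc{Echo} for $B$ and at least $f+1$ correct nodes signed \textsc{Echo} for $B'$. Since the total number of correct nodes is $n - f = 2f+1$, and $(f+1) + (f+1) > 2f+1$, these two sets intersect in at least one correct node $r$. But the $!\bstate.\echo$ guard inside the \textsc{Init} handler of Figure~\ref{code:bbca-broadcast} ensures that a correct node invokes \textsc{becomesInitialized} at most once per BBCA instance, hence signs exactly one \textsc{Echo} message. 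Thus $r$ cannot have echoed both $B$ and $B'$, a contradiction.

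The main subtlety will be the complete-certificate case, where the certificate itself lists \textsc{Ready} signatures and does not directly expose Echo signers; reducing it to a statement about correct Echo signers requires carefully tracing the handler logic and using quorum intersection once at the Ready level to pick a correct Ready signer, and a second time at the Echo level inside that signer's local view. Once this reduction is in place, the rest is a straightforward counting argument on correct nodes.
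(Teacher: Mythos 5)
Your proof is correct, but it takes a different route from the paper. The paper disposes of this lemma in one line: since all correct nodes in view $v$ participate in a single \bbca instance, uniqueness ``follows directly from \bbca consistency,'' i.e., it is delegated to the abstract Consistency property of the primitive (proved separately in Lemma~\ref{lem:bbca-conistency} by the same Echo-quorum intersection you use). You instead unroll the abstraction and argue at the level of the Bracha-style implementation: every certificate, complete or adopt, is traced back to at least $f+1$ correct \textsc{Echo} signers, and two such sets inside the population of $2f+1$ correct nodes must intersect in a node that echoed twice, contradicting the $!\bstate.\echo$ guard. The paper's approach is more modular and shorter; yours is self-contained and, notably, actually covers a case the cited property does not literally address: the stated Consistency property only constrains a pair where at least one node \emph{\completes}, so two conflicting adopt certificates with no completion anywhere are not excluded by it as written, whereas your direct intersection of the two Echo quorums handles that case. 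Your handling of the complete-certificate case (two layers of quorum intersection, first at the \textsc{Ready} level to find a correct Ready signer, then inside that signer's collected \textsc{Echo} set) is sound and is essentially the reduction the paper performs inside its proof of Lemma~\ref{lem:bbca-conistency}.
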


\begin{proof}
The lemma follows directly from \bbca consistency, as in each view correct nodes initialize a single \bbca instance.
\end{proof}

\begin{lemma}
\label{lem:causal-cert}
Suppose a node $p$ commits a \leader block $B_v$ for view $v$.
Then there exists a view $v'$ where $v' \geq v$, such that $p$ has \primitive-\completed a block $B_{v'}$ at view $v'$ and
such that there is a causal chain of \complete certificates or adopt certificates of \leader blocks from $B_{v'}$ back to $B_v$.
\end{lemma}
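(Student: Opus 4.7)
The plan is to track when $p$ decides to commit $B_v$ back through the bookkeeping in $\textsc{tryCommit}$ and $\textsc{finalize}$, and expose the causal chain of certificates that was built along the way. First, I would pinpoint the moment $p$ commits $B_v$: this occurs inside the \textbf{while}-loop of $\textsc{tryCommit}$ when $\lastCommitted$ is advanced to $v$, which requires $\finalized[v] \neq \bot$ and $\finalized[v] \neq \textsc{no-op}$. Thus $\finalized[v] \gets B_v$ must have been executed inside some earlier call to $\textsc{finalize}$. Each top-level invocation of $\textsc{finalize}$ is launched from $\textsc{tryCommit}$, and $\textsc{tryCommit}$ itself has only two trigger sites in the protocol: (i) the local $\prim$-\complete event, and (ii) receipt of a $\mathit{newViewBlock}$ carrying a complete certificate. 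Hence the top-level call must have been $\textsc{finalize}(B^{*})$ where $p$ holds a complete certificate for some block $B^{*}$ and, operationally, has $\prim$-\completed it (case (ii) reduces to case (i) because a complete certificate is precisely the $2f{+}1$ signed $\textsc{Ready}$ messages that define the $\prim$-\complete event).

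Next, I would exploit the recursive structure of $\textsc{finalize}$: at each step the argument $B$ records $\finalized[B.\view] \gets B$, then selects $B_{\max}$, defined as the causal predecessor of $B$ of highest view possessing an adopt or complete certificate, marks the strictly intermediate views as $\textsc{no-op}$, and recurses on $B_{\max}$. Iterating produces a finite sequence of blocks $B^{*} = B_{v_{0}}, B_{v_{1}}, \dots, B_{v_{k}}$ with $v_{0} > v_{1} > \cdots > v_{k}$, such that each $B_{v_{i+1}}$ is a causal ancestor of $B_{v_{i}}$ and each $B_{v_{i}}$ carries either an adopt or a complete certificate. Because $\finalized[v]$ is assigned to a non-$\textsc{no-op}$ block only at such an assignment step, and $v$ is committed rather than skipped, some $v_{i}$ must equal $v$ with $B_{v_{i}} = B_{v}$. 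Setting $v' = v_{0}$ then yields $v' \geq v$, identifies $B_{v'} = B^{*}$ as a block $p$ has $\prim$-\completed, and exhibits the required causal chain of complete/adopt certificates from $B_{v'}$ down to $B_{v}$.

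The main obstacle I anticipate is handling the early-exit in $\textsc{finalize}$: an intermediate call $\textsc{finalize}(B_{v_{i}})$ returns immediately when $\finalized[v_{i}]$ was already set by an earlier commit attempt, which could truncate the chain built by the current invocation before it reaches $v$. To close this gap I would proceed by strong induction on $v$: whenever the current recursion bottoms out at an already-finalized view $v_{i} > v$, the induction hypothesis applied at the earlier time when $\finalized[v_{i}]$ was first assigned supplies a causal chain of certificates from some previously $\prim$-\completed block down to $B_{v}$, which can be spliced onto the prefix produced by the current recursion. A minor, but worth stating, invariant needed for the splicing to typecheck is that $\finalized[v_{i}]$ is never overwritten once set, so the chain referred to by the induction hypothesis remains valid; this is immediate from the guard at the top of $\textsc{finalize}$ and from $\bbca$ consistency (Lemma~\ref{lem:unique-cert}), which prevents two distinct certificates at the same view.
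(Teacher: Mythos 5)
Your proof is correct and follows essentially the same route as the paper's: the paper's argument is a two-sentence inspection of \textsc{tryCommit} and \textsc{finalize}, observing that commits are only triggered by a \primitive-\complete (or an equivalent certificate) and that \textsc{finalize} recurses backward through causal predecessors each carrying a \complete or adopt certificate. You simply carry out that inspection in full detail, and your extra care about the second trigger site of \textsc{tryCommit} and the early-exit guard in \textsc{finalize} (handled by induction on when $\finalized[v]$ is first assigned) fills in steps the paper leaves implicit.
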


\begin{proof}
Nodes process commit blocks in strictly increasing views, iterating back to their last committed view.
The lemma follows immediately from the code:
procedure \textsc{tryCommit} is invoked only upon \primitive-\complete, and iterates backward through blocks in the $\finalized$ array,
 each of which has a \complete or an adopt certificate.
\end{proof}

\begin{lemma}
\label{lem:unique-commit}
If two correct nodes $p$, $q$ commit \leader blocks $B_1$, $B_2$ respectively for view $v$, then $B_1 = B_2$.
\end{lemma}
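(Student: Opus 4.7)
The plan is to derive this directly from Lemmas~\ref{lem:unique-cert} and~\ref{lem:causal-cert}, which together already pin down uniqueness of certified blocks per view and ensure that any committed \leader block carries such a certificate. The main work is just tying these together cleanly.

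First I would apply Lemma~\ref{lem:causal-cert} to $p$ and to $q$ separately: from $p$'s commit of $B_1$ at view $v$, I obtain some $v_p \ge v$ and a causal chain of complete/adopt certificates from the \primitive-completed block $B_{v_p}$ back to $B_1$; in particular, $B_1$ itself has either a \complete certificate or an adopt certificate. By the symmetric argument on $q$, $B_2$ also has a \complete or adopt certificate for view $v$. Then I would invoke Lemma~\ref{lem:unique-cert}, which says there is at most one block per view holding such a certificate, to conclude $B_1 = B_2$.

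The only subtlety, and what I would spell out to make the proof airtight, is verifying that a committed block in view $v$ really does carry a certificate for $v$, rather than being a view finalized only with $\textsc{no-op}$. Inspecting \textsc{tryCommit} in Figure~\ref{code:bbca-chain}, the \textbf{commit} statement is guarded by $\finalized[\lastCommitted] \neq \textsc{no-op}$, and \textsc{finalize} only stores a non-$\textsc{no-op}$ block into $\finalized[B.\view]$ when that block is reached on the backward walk from a \primitive-completed block through predecessors with \complete or adopt certificates. This is exactly the hypothesis needed to reapply Lemma~\ref{lem:unique-cert}.

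I do not anticipate a real obstacle here; the argument is essentially a bookkeeping step on top of the two prior lemmas. If anything, the delicate point is simply being explicit that ``commits a \leader block $B_1$ for view $v$'' in the lemma statement refers to the non-$\textsc{no-op}$ branch of \textsc{tryCommit}, so that the certificate for $v$ is guaranteed to exist and Lemma~\ref{lem:unique-cert} can be applied.
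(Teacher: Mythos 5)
Your proposal is correct and matches the paper's argument: both reduce the claim to Lemma~\ref{lem:unique-cert} by observing that a committed (finalized, non-\textsc{no-op}) block for view $v$ must carry a \complete or adopt certificate for $v$. The paper asserts this last fact directly from the finalization logic, whereas you route it through Lemma~\ref{lem:causal-cert} and an inspection of \textsc{tryCommit}; this is only a slightly more explicit justification of the same step.
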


\begin{proof}
For the Lemma statement to hold, $p$ and $q$ have finalized view $v$ with $B_1$ and $B_2$ respectively.
Therefore, both $B_1$ and $B_2$ received a \complete and/or an adopt certificate for view $v$, and by
Lemma~\ref{lem:unique-cert} $B1=B2$.
\end{proof}

\begin{lemma}[Consistency]
If a correct node $p$ commits $B_v$ for some view $v$, and another
correct node $q$ commits any $C_{w}$ for view $w \geq v$, then $q$ commits $B_{v}$ for view $v$.
\label{sem:safety2}
\end{lemma}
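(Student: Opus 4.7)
The plan is to prove the lemma by induction on $w - v$. The base case $w = v$ follows immediately from Lemma~\ref{lem:unique-commit}, which forces the \leader block committed by $q$ at view $v$ to equal $B_v$.

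For the inductive step $w > v$, I will show that $q$'s $\finalized[v]$ is set to $B_v$ rather than $\textsc{no-op}$, from which the commit of $B_v$ at view $v$ follows via the \textsc{tryCommit} loop (since $q$'s $\lastCommitted$ must have advanced past $v$ in order to commit $C_w$). Because $q$ committed $C_w$, Lemma~\ref{lem:causal-cert} gives a causal chain of complete/adopt certificates descending from some block $B_{w'}$ (at view $w' \geq w$) that $q$ has \primitive-\completed, stepping through cert-carrying \leader blocks in $q$'s DAG. Since \textsc{finalize} recursively assigns $\finalized[v'] = B'$ for each cert-carrying block $B'$ it encounters in the causal history (and marks intermediate views as $\textsc{no-op}$), it suffices to prove that the causal history of $B_{w'}$ contains a \leader block at view $v$ with a complete or adopt certificate; by Lemma~\ref{lem:unique-cert} any such block must be $B_v$.

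The central claim I would establish is the following sub-lemma: \emph{every \leader block $B'$ at view $v' > v$ carrying a complete or adopt certificate transitively causally references a \leader block at view $v$ with a complete or adopt certificate.} I would prove this by strong induction on $v' - v$. In the base case $v' = v+1$, the justification for $B'$'s proposal from view $v$ is either (i) a complete/adopt certificate for view $v$, or (ii) a set of $2f+1$ signed $\langle\textsc{noadopt}, v\rangle$ messages. Case (ii) is ruled out by \bbca \Complete-Adopt (Lemma~\ref{lem:bbca-complete-adopt}): because $p$ has \primitive-\completed $B_v$, at least $f+1$ correct nodes return $\textsc{\primitive-adopt}$ when probing view $v$ and therefore never sign $\langle\textsc{noadopt}, v\rangle$, leaving at most $2f < 2f+1$ admissible signatures. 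Thus $B'$ must reference a cert for view $v$, which by Lemma~\ref{lem:unique-cert} is for $B_v$. In the inductive step $v' > v+1$, I would case-split on the justification used by the leader of view $v'$ for view $v'-1$: a complete/adopt cert for view $v'-1$ invokes the inductive hypothesis directly on the referenced block, whereas a $2f+1$-no-adopt justification requires the more delicate argument below.

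The main obstacle will be the no-adopt case of the inductive step. The plan is a pigeonhole argument: among the $f+1$ correct signers of $\langle\textsc{noadopt}, v'-1\rangle$ contained in $B'$'s causal history, at least one must coincide with one of the $f+1$ correct nodes that adopted $B_v$, since there are only $2f+1$ correct nodes in total and $(f+1) + (f+1) - (2f+1) \geq 1$. The \newview block produced by that signer implicitly causally references the highest cert-carrying block in its local DAG, which is either $B_v$ itself or, by the inductive hypothesis applied to that higher cert, a block that transitively references $B_v$. Rigorously justifying this ``implicit causal reference'' step is the delicate part: it requires an invariant on how each correct node's local DAG grows monotonically and how the construction of its \newview blocks always preserves a causal pointer to the highest adopted/completed \leader block it knows. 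Once this invariant is in place, the sub-lemma is established, and Consistency follows immediately.
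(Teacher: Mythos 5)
Your proposal is correct in substance but organizes the argument quite differently from the paper. The paper picks $v'$ (resp.\ $w'$) as the lowest view $\geq v$ at which $p$ (resp.\ $q$) \primitive-\completes a block, argues via \Complete-Adopt that $q$ cannot skip view $v'$ because $2f+1$ signed noadopts for $v'$ cannot exist, and then asserts that both nodes iterate backward from $B_{v'}$ to $B_v$ through identical chains. You instead prove a universally quantified structural invariant --- every cert-carrying \leader block at view $>v$ transitively references a cert-carrying \leader block at view $v$ --- by strong induction on the view gap, and derive the commit of $B_v$ from the behaviour of \textsc{finalize}/\textsc{tryCommit}. The core insight is identical (\Complete-Adopt makes $2f+1$ noadopt signatures impossible for a view in which some correct node \completed), but your decomposition is more explicit about the one genuinely delicate case: a view being jumped over not because of $2f+1$ noadopts \emph{for that view}, but because a later leader justified its proposal with $2f+1$ noadopts for some higher view whose \newview blocks reference only lower certificates. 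The paper's proof silently folds this case into the phrase ``$q$ cannot skip with $2f+1$ noadopts for view $v'$,'' whereas you isolate it and propose the pigeonhole-plus-``implicit causal reference'' invariant needed to close it. That is a real improvement in rigor, even though you leave the invariant itself as an obligation rather than discharging it.

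One concrete correction to your pigeonhole: you intersect the $f+1$ correct adopters of $B_v$ with the $f+1$ correct noadopt signers for $v'-1$ ``since there are only $2f+1$ correct nodes in total.'' The model only guarantees \emph{at least} $2f+1$ correct nodes; if all $3f+1$ are correct, $(f+1)+(f+1)\leq 3f+1$ and your count gives no intersection. Count instead against the full node set: the $2f+1$ signers of $\langle\textsc{noadopt}, v'-1\rangle$ and the $f+1$ correct adopters of $B_v$ satisfy $(2f+1)+(f+1) = 3f+2 > n$, so they intersect, and any node in the intersection is correct because it lies in the adopter set. With that fix the conclusion you need still holds.
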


\begin{proof}
Let $v'$ be the lowest view where $v' \geq v$ for which $p$ \primitive-\completes some block $B_{v'}$ ($v'$ exists
by Lemma~\ref{lem:causal-cert}).
Likewise, let $w'$ be the lowest view where $w' \geq v$ for which $q$ \primitive-\completes some block $C_{w'}$ (likewise, $w'$ exists).
Without loss of generality, assume $w' \geq v'$.

If $w' = v'$,
then and by Lemma~\ref{lem:unique-commit}, $C_{w'} = B_{v'}$.
Otherwise, $w' > v'$.

When $q$ iterates from view $w'$ back to view $v$, it goes through view $v'$.
By assumption, $p$ \primitive-\completed $B_{v'}$ and
$q$ did not obtain a \complete certificate for view $v'$.
Therefore, $q$ either adopts $B_{v'}$ or skips view $v'$.
However,
by the \Complete-Adopt property of \bbca,
$q$ cannot skip with $2f+1$ $\langle\textsc{\primitive-noadopt}\rangle$ for view $v'$,
since $p$ has \primitive-\completed a block in view $v'$.% one of the $2f+1$ \newview blocks must be by a correct node who adopts $B_{v'}$.
Therefore, $q$ commits some \leader block in view $v'$ via an adopt certificate,
and by Lemma~\ref{lem:unique-cert} this block must be $B_{v'}$.

In both scenarios, by Lemma~\ref{lem:unique-commit}, both $p$ and $q$ iterate from $B_{v'}$ back to $B_v$ through an identical sequences of committed \leader blocks.
\end{proof}

\subsection{Liveness}
We assume a view synchronization protocol which guarantees that, after GST, all correct nodes enter view $v$ within bounded time $\Delta_{sync}$.
Section~\ref{sec:view-sync} shows \system can satisfy this property
without incurring extra communication, using the DAG causal ordering of blocks by all nodes.
%we prove that the protocol described in Section~\ref{sec:practical} satisfies this property.

A view-timer $\Gamma$ with duration $\Delta_\Gamma$ is picked such that after GST, if $2f+1$ correct nodes enter view $v$ at time $t$ then
$\Gamma$ suffices for the leader of view $v$ to

(i) Enter view $v$,

(ii) Broadcast a \leader block via \bbca, and reach $\Call{\bbca-\complete}{}$ at all $2f+1$ correct nodes.

I.e. $\Gamma\ge \Delta_{sync} + 3\Delta$, where $\Delta$ the maximum network delay after GST.

\begin{lemma}
\label{lem:live-1}
After GST, if the leader of view $v$ is correct, $\Call{\bbca-\complete}{}$ is reached in view $v$ by all correct nodes.
\end{lemma}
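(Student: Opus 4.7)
The plan is to reduce the lemma to the \bbca validity property (Lemma~\ref{lem:bbca-validity}), which guarantees \prim-completion at every correct node provided the correct sender invokes \prim-broadcast and no correct node aborts the instance via \textsc{\prim-probe}. So the task reduces to verifying that, after GST with a correct leader, no correct node's view timer for view $v$ fires before the three-trip \bbca protocol completes at all $2f+1$ correct nodes.

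First I would fix notation: let $t_L$ denote the time the correct leader enters view $v$ and invokes $\Call{\prim-broadcast}{\bid, B_v}$ for its leader block. By the view-synchronization assumption in effect after GST, every correct node enters view $v$ within a window of width $\Delta_{sync}$, so each correct node $p$ enters view $v$ at some time $t_p$ with $|t_p - t_L| \le \Delta_{sync}$, and in particular $t_p \ge t_L - \Delta_{sync}$. Upon entering, $p$ restarts its view timer $\Gamma$, so $p$'s timer for view $v$ cannot expire earlier than $t_p + \Delta_\Gamma$; hence the earliest possible probing event at any correct node is at time $t_L - \Delta_{sync} + \Delta_\Gamma$.

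Second, I would bound the time for \bbca to complete. Because the leader broadcasts at $t_L$ and message delays after GST are at most $\Delta$, Bracha-style \bbca (three sequential network trips of \textsc{Init}, \textsc{Echo}, \textsc{Ready}) ensures that each correct node accumulates $2f+1$ valid \textsc{Ready} signatures and invokes $\Call{BBCA-\complete}{}$ by time $t_L + 3\Delta$, \emph{provided} no correct node has aborted its local \bbca state in the meantime. Combining with the choice $\Delta_\Gamma \ge \Delta_{sync} + 3\Delta$, we get
\[
t_L - \Delta_{sync} + \Delta_\Gamma \;\ge\; t_L + 3\Delta,
\]
so every correct node's earliest possible probe time is no sooner than the latest possible \prim-complete time. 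Hence no correct node sets $\abort \gets \true$ during the \bbca run.

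Finally, with the abort guard never triggered at any correct node, Lemma~\ref{lem:bbca-validity} applies directly and yields $\Call{BBCA-\complete}{}$ at every correct node in view $v$. The only subtle step is the timing bookkeeping in the second paragraph: one must be careful that the worst case is the correct node that enters view $v$ the earliest (and therefore has its timer fire first), rather than the leader itself, and that $\Delta_\Gamma$ was sized precisely to absorb this $\Delta_{sync}$ skew on top of the three network trips of \bbca.
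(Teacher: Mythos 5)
Your proof is correct and follows essentially the same argument as the paper: bound the view-entry skew by $\Delta_{sync}$, bound the three Bracha trips by $3\Delta$, and use $\Delta_\Gamma \ge \Delta_{sync} + 3\Delta$ to conclude that no correct node's timer fires (hence no probe/abort occurs) before every correct node \primitive-\completes. The only cosmetic difference is that you anchor time at the leader's entry and phrase the conclusion as a reduction to the \bbca validity lemma, whereas the paper anchors at the first node's entry and traces the \textsc{Echo}/\textsc{Ready} rounds directly.
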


\begin{proof}
Let as assume that we are after GST and the first correct node $p$ to ender view $v$ does so at time $t$.
By Lemma~\ref{lem:view-sync}, all correct nodes, including the leader, enter view $v$ the latest at time $t+\Delta_{sync}$.
Therefore the leader broadcasts a block $B_v$ the latest at time $t+\Delta_{sync}$,
that will get accepted inside \bbca (we assume the leader is correct), the latest at $t+\Delta_{sync}+\Delta$ by all correct nodes.
The latest at $t+\Delta_{sync}+2\Delta$ all correct nodes will have received $2f+1$ $\textsc{Echo}$ messages and
the latest at $t+\Delta_{sync}+3\Delta$ all correct nodes will have \bbca-\completed $B_v$, i.e, before $\Gamma$ expires for $p$.
\end{proof}

\begin{lemma}
\label{lem:live-4}
For every view $v$, a correct node eventually enters the view.
Furthermore, upon entering view $v$, the node has received one of the following from view $v-1$:
(i) a commit certificate of the \leader block of view $v-1$,
(ii) an adopt certificate of the \leader block of view $v-1$,
(iii) $2f+1$ signed noadopt of the \leader block of view $v-1$.
\end{lemma}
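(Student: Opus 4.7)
The plan is to proceed by induction on the view number $v$, leveraging the view synchronization assumption that all correct nodes enter view $v-1$ within $\Delta_{sync}$ after GST, together with the \bbca \Complete-Adopt property established in Lemma~\ref{lem:bbca-complete-adopt}. The base case $v=1$ follows directly from the initialization code in Figure~\ref{code:bbca-chain}: a correct node invokes \textsc{enterView}$(1)$ and has $\newviewblocks[0]$ seeded with a commit certificate $\cert_0$ for the genesis block, so item (i) is vacuously satisfied.

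For the inductive step, I would assume that every correct node eventually enters view $v-1$ and then split on the dynamics of view $v-1$. If the leader of view $v-1$ is correct (after GST), Lemma~\ref{lem:live-1} immediately yields that every correct node reaches \bbca-\complete on a \leader block with a commit certificate, so the handler for the \bbca-\complete event fires and invokes \textsc{enterView}$(v)$ with item (i) in hand. Otherwise, the view timer $\Gamma$ eventually expires at each correct node that has not yet completed, triggering a \primitive-probe. Here I would apply \Complete-Adopt in two directions. First, if \emph{some} correct node has already \primitive-\completed a block $B_{v-1}$, then at least $f+1$ correct nodes return \textsc{\primitive-adopt} upon probing and broadcast \newview blocks carrying an adopt certificate, so every correct node eventually collects either a commit certificate (item (i)) or an adopt certificate (item (ii)). Second, if no correct node completes, then each probing correct node returns either \textsc{\primitive-adopt} or \textsc{\primitive-noadopt}; if at least one returns adopt, it broadcasts the adopt certificate and all correct nodes eventually obtain item (ii); otherwise all $2f+1$ correct nodes return noadopt, broadcast signed $\langle\textsc{noadopt}, v-1\rangle$ messages, and every correct node eventually accumulates $2f+1$ such signatures, yielding item (iii).

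The step I expect to be the main obstacle is reconciling the local timer with global progress: a correct node that locally obtains \textsc{\primitive-noadopt} must not enter view $v$ prematurely with an insufficient justification, yet it also must not get permanently stuck. The enhanced view synchronization rule from Section~\ref{sec:practical} exactly addresses this by requiring such a node to wait for $2f+1$ \newview blocks before entering; I would invoke this rule together with the best-effort broadcast guarantee after GST to argue that those $2f+1$ blocks, each carrying a commit certificate, an adopt certificate, or a signed noadopt from its sender, are eventually delivered. A secondary detail is confirming that the $f+1$ adopting correct nodes in the first sub-case actually broadcast their \newview blocks before proceeding, which follows from inspection of the timer-expiry handler in Figure~\ref{code:bbca-chain}.
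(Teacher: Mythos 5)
Your proof is correct and takes essentially the same route as the paper's: induction on the view number, observing that every correct node eventually either \primitive-\completes the \leader block of view $v-1$ or times out and probes, so that the broadcast \newview blocks eventually supply one of the three justifications (commit certificate, adopt certificate, or $2f+1$ signed noadopts). Your write-up is in fact more explicit than the paper's rather terse argument, notably in invoking \Complete-Adopt for the mixed complete/adopt/noadopt cases and in using the rule that a node holding only a local noadopt waits for $2f+1$ \newview blocks before entering the next view.
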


\begin{proof}
We prove this by induction.
After entering view $v$,
each node broadcasts a \newview block for view $v-1$
%\DM{check if we should refer to this \newview block as "for view $v$"}
either upon receiving messages from others (including possibly the \leader block for view $v-1$), or its own view $v-1$ timer expires.
Since eventually all correct nodes send \newview blocks, a node $p$ enters view $v$ when one of the following conditions holds:

\begin{itemize}
\item $p$ received $2f+1$ \newview messages for view $v-1$ carrying (signed) $\textsc{\primitive-noadopt}$ (for view $v-1$).
\item $p$ received a commit-certificate for view $v-1$.
\item $p$ received a \newview message carrying an adopt-certificate for view $v-1$.
\end{itemize}
\end{proof}

\begin{lemma}
\label{lem:live-2}
For every view $v$ a correct node either commits a block, or skips the view.
\end{lemma}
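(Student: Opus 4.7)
The plan is to combine the view-synchronization assumption and Lemma~\ref{lem:live-1} with a careful walk through the \textsc{finalize} recursion to show that, for any view $v$, the entry $\finalized[v]$ at any correct node $p$ is eventually set to a concrete \leader block or to $\textsc{no-op}$, and that the $\lastCommitted$ pointer eventually advances past $v$.

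First, I would fix a correct node $p$ and a view $v$, and use round-robin rotation together with the assumption that eventually GST is reached: there is some view $v^* \ge v$ whose leader is correct and who proposes after GST. By Lemma~\ref{lem:live-1}, every correct node (in particular $p$) eventually reaches $\Call{\prim-complete}{\bid, B_{v^*}, \cert_{B_{v^*}}}$, which triggers $\Call{tryCommit}{B_{v^*}}$, which in turn invokes $\Call{finalize}{B_{v^*}}$.

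The second step is to unwind the \textsc{finalize} recursion. Each \leader block in \system is proposed together with a \newview descriptor that either (i) carries a complete certificate for the previous view's \leader block, (ii) carries an adopt certificate for it, or (iii) carries $2f+1$ $\textsc{noadopt}$ signatures whose implicit causal anchor is the highest complete/adopt certificate in the causal history. Hence for every $B$ with a certificate, the ``causal predecessor of highest view with an adopt or complete certificate'' $B_{\max}$ is well-defined, and the $\textsc{no-op}$ marking covers exactly the views in $(B_{\max}.\view, B.\view)$. Iterating this until the genesis, the recursion assigns a non-$\bot$ value (either a concrete \leader block, by Lemma~\ref{lem:unique-cert} the unique one, or $\textsc{no-op}$) to $\finalized[i]$ for every $i \in [1, v^*]$, and in particular to $\finalized[v]$.

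Finally, once $\finalized[i] \neq \bot$ for all $i \le v$, the \textbf{while} loop inside \textsc{tryCommit} advances $\lastCommitted$ past $v$, at which point $p$ either executes $\textbf{commit}\,\finalized[v]$ (committing the \leader block of view $v$) or observes $\finalized[v] = \textsc{no-op}$ and skips the view, as required. The main obstacle I expect is the bookkeeping of the second step: one has to argue that the union, over all iterations of the recursion, of the intervals $(B_{\max}.\view, B.\view]$ really covers every integer in $[1, v^*]$ with no gaps, and that the choice of $B_{\max}$ from $B$'s causal predecessors is consistent across correct nodes so that the proof combines cleanly with the safety argument of Lemma~\ref{sem:safety2}; both follow from the proposal-time invariants on \newview blocks and from Lemma~\ref{lem:unique-cert}, but they need to be stated explicitly.
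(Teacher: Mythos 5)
Your proposal is correct and follows essentially the same route as the paper: both anchor the argument in Lemma~\ref{lem:live-1} to obtain a view $v'\geq v$ whose \leader block is \primitive-\completed by all correct nodes after GST, and then unwind the \textsc{tryCommit}/\textsc{finalize} backward recursion, using the fact that every \leader block's $\blockref$ carries a complete certificate, an adopt certificate, or $2f+1$ \textsc{noadopt} signatures anchored at the highest certified predecessor, so that every view $\leq v'$ is finalized either with a block or with $\textsc{no-op}$. The only cosmetic difference is that the paper cites Lemma~\ref{lem:live-4} for the structure of the certificates carried across views, whereas you derive the same invariant directly from the proposal rules.
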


\begin{proof}
By Lemma~\ref{lem:live-4}, eventually every correct node enters $v$ having received a ``skip'' certificate ($2f+1$ noadopt's), a commit certificate, or an adopt certificate.
By Lemma~\ref{lem:live-1}, after GST eventually there is a view $v' \geq v$ whose leader is correct and
all nodes \bbca-\complete the \leader block of $v'$.
A node then invokes $\Call{tryCommit}{}$ which iterates backward starting at the committed block of view $v'$.
Starting with the block that becomes committed in view $v'$, each committed block $B$ encountered in the loop determines a block $B_{prev}$ that immediately precedes $B$ in the sequenced $\backbone$:

\begin{itemize}
\item
If $B.\blockref$ references a commit or an adopt certificate, then $B_{prev}$ is the certified block.
\item
Otherwise, $B_{prev}$ is the block whose view is maximal among the commit and adopt certificates which are referenced by blocks in $B.\blockref.\skipped$.
\end{itemize}

Then, all the views between $B_{prev}$ and $B$ becomes finalized as skipped;
$B_{prev}$ becomes committed and the next loop continues iterating backward from it.

The loop completes when all non-finalized views $\leq v'$ become finalized.
Therefore, if view $v$ hasn't been finalized already, it will become finalized either as skipped or committed.
\end{proof}

\begin{lemma}[Chain growth]
\label{lem:growth}
The chain of blocks grows infinitely, i.e., eventually, all correct nodes commit a \leader block for some view $v$.
\end{lemma}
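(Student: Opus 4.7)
The plan is to combine the preceding liveness lemmas with the round-robin leader rotation to pinpoint a view in which every correct node drives a commit, and then to verify that the commit actually advances $\lastCommitted$ rather than getting stuck in a gap of the $\finalized$ array.

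First, I would invoke Lemma~\ref{lem:live-4} so that every correct node eventually enters every view. Combined with round-robin rotation and $f < n/3$, this guarantees infinitely many views after GST whose leader is correct; pick any such view $v^\star$ for which all $2f+1$ correct nodes enter $v^\star$ strictly after GST. Lemma~\ref{lem:live-1} then applies directly, giving that every correct node $p$ reaches $\Call{\bbca-\complete}{}$ on the leader block $B_{v^\star}$ before its $\viewTimer$ expires.

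Second, I would trace what the handler for $\Call{\bbca-\complete}{}$ in Figure~\ref{code:bbca-chain} actually does. It invokes $\Call{tryCommit}{B_{v^\star}}$, which in turn calls $\Call{finalize}{B_{v^\star}}$. The recursive \textsc{finalize} follows the $B_{\max}$ pointer through the causal chain of commit/adopt certificates embedded in $B_{v^\star}$, sets $\finalized[v^\star] \gets B_{v^\star}$, and fills every intermediate slot with either the certified leader block or $\textsc{no-op}$. Let $v_0$ be the value of $p.\lastCommitted$ just before the handler runs; by Lemma~\ref{lem:live-2} every view in $(v_0, v^\star]$ is eventually finalized, and by the recursive traversal just described, at the moment the while loop of $\Call{tryCommit}{}$ runs there are no holes below $v^\star$. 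Hence the loop advances $\lastCommitted$ all the way up to $v^\star$, and because $\finalized[v^\star] = B_{v^\star} \neq \textsc{no-op}$, it emits a $\textbf{commit}$ on $B_{v^\star}$. Since there are infinitely many such $v^\star$, the committed chain of every correct node grows without bound.

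The main obstacle is justifying that the prefix up to $v^\star$ is truly contiguously filled in $\finalized$ at the moment the while loop executes. This is delicate because $p$ may never have directly observed the outcomes of many intermediate views: some may have been skipped, and some may only be reachable through later certificates. I would handle this by stating a short sublemma asserting that after $\Call{finalize}{B_{v^\star}}$ returns, $\finalized[i] \neq \bot$ for every $i \in (v_0, v^\star]$. The sublemma is proved by induction on the depth of the $B_{\max}$ recursion, using that each commit or adopt certificate referenced along the chain carries its view number and that the \textbf{for-all} loop inside \textsc{finalize} explicitly fills every gap with $\textsc{no-op}$. Once this invariant is in place, Chain growth follows immediately from Lemmas~\ref{lem:live-1}, \ref{lem:live-4}, and~\ref{lem:live-2}.
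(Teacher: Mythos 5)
Your proof is correct and follows essentially the same route as the paper's: use Lemma~\ref{lem:live-1} together with the leader-rotation assumption to obtain a view $v^\star$ whose \leader block is \primitive-\completed by all correct nodes, then use Lemma~\ref{lem:live-2} to conclude that every earlier view is finalized (committed or skipped), so the \textsc{tryCommit} loop advances past $v^\star$. Your additional sublemma about the contiguity of the $\finalized$ array is a careful elaboration of the step the paper compresses into ``$v$ will be recursively committed,'' but it does not change the underlying argument.
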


\begin{proof}
By Lemma~\ref{lem:live-1} and the assumption that $\textsc{GetProposer}(v)$ returns a correct leader for some view $v$ after GST,
all correct nodes \bbca-\complete some \leader block $B_v$ for view $v$.
We therefore need to show that eventually $B_v$ will also be committed.
By Lemma~\ref{lem:live-2}, for every view $v'<v$ either some block is committed, or the view is skipped.
Therefore, $v$ will be recursively committed.
\end{proof}

\begin{lemma}[Censorship resistance]
\label{lem:censorship-resistance}
After GST, if a correct node broadcasts a block $b$, then the block will be eventually committed.
\end{lemma}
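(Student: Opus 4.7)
The plan is to combine the best-effort delivery of $b$ after GST with the chain-growth result and the protocol's rule for committing non-\leader blocks through the causal history of committed \leader blocks. Concretely, I would argue that $b$ eventually lies in the causal history of some committed \leader block, after which the deterministic traversal triggered by \textsc{tryCommit} commits $b$ along with it.

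First I would observe that, since $b$ is broadcast after GST by a correct node, best-effort broadcast together with bounded delay $\Delta$ implies that there is a time $t_b$ after which $b$ is in the local DAG of every correct node. Next, round-robin leader rotation together with $f<n/3$ guarantees that correct leaders recur infinitely often after GST, so there is some view $v$ whose leader $L$ is correct and which $L$ enters after $t_b$. By the protocol's block-construction rule, a correct proposer extends the uncommitted frontier, so $L$'s \leader block $B_v$ causally references $b$ (directly or transitively through previously broadcast blocks that already reference $b$).

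I would then invoke Lemma~\ref{lem:live-1} to conclude that $B_v$ is \bbca-\completed at every correct node, and Lemma~\ref{lem:growth} to conclude that $B_v$ is eventually committed by every correct node. Finally, by the \textsc{tryCommit} logic in Figure~\ref{code:bbca-chain} combined with the deterministic rule for ordering non-backbone blocks in a committed \leader block's causal history, $b$ is committed at every correct node as part of $B_v$'s commit (if not earlier via some previously committed \leader block that also references $b$).

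The main obstacle is formalizing the second step: the pseudocode states that a proposer \emph{can} add causal references to the uncommitted frontier, but for censorship resistance one must assume a correct leader \emph{does} reference its uncommitted frontier (or at least any block it has delivered that is not yet in the committed prefix). Once this standard liveness convention is made explicit, the rest of the argument reduces to an application of the already-established lemmas and the causal-history commit rule; no new technical machinery is needed.
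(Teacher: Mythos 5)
Your proposal is correct and follows essentially the same route as the paper, which handles a non-\leader block $b$ with a single sentence (``the block will be committed as a causal reference of a \leader block'') and relies on Lemmas~\ref{lem:live-1} and~\ref{lem:live-2} exactly as you do; your version simply expands that one sentence into the full argument (delivery of $b$ after GST, an eventual correct leader referencing it, completion and commit of that leader's block). You are also right to flag that the pseudocode only says causal references to the uncommitted frontier \emph{can} be added, so the liveness convention that correct proposers \emph{do} reference their delivered-but-uncommitted blocks must be assumed --- the paper glosses over this. The only piece you leave implicit is the subcase where $b$ is itself a \leader block, which the paper dispatches directly via Lemma~\ref{lem:live-1} (completion) and Lemma~\ref{lem:live-2} (all earlier views resolve), since a completed \leader block is committed through the finalize/certificate chain rather than the generic non-backbone traversal; this is a minor omission, as you already cite the needed lemmas.
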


\begin{proof}
If the block is a \leader block, then by Lemma~\ref{lem:live-1}, $b$ gets \bbca-\completed for some view $v$.
By Lemma~\ref{lem:live-2}, for all views $v'<v$ eventually, either some block gets committed or the view is skipped and, therefore, eventually $b$ gets committed.
Otherwise the block will be committed as a causal reference of a \leader block.
\end{proof}

\subsection{View Synchronization}
\label{sec:view-sync}

\begin{lemma}[View Synchronization]
\label{lem:view-sync}
Let $p$ be the first correct node enters view $v+1$ at time $t$ after GST.
Then all correct nodes view $v+1$  the latest at time  $t+\Delta$, where $\Delta$ is the maximum network delay.
\end{lemma}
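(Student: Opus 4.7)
My plan is to argue by case analysis on the event that causes the first correct node $p$ to enter view $v+1$ at time $t$. Using the view-synchronization rules described in Section~\ref{sec:practical}, the trigger is one of: (i) $p$ \primitive-\completes the \leader block $B_v$ of view $v$; (ii) $p$ receives a \newview block carrying a complete-certificate for $B_v$; (iii) $p$ receives a \newview block carrying an adopt-certificate for $B_v$; (iv) $p$'s view-$v$ timer expires and $p$ either adopts locally or, after a noadopt probe, collects $2f+1$ \newview blocks; or (v) $p$ receives $f+1$ \newview blocks signed \textsc{noadopt} for view $v$ and locally probes its \bbca instance. In each case I will exhibit evidence that, by the best-effort broadcast delivery bound $\Delta$ after GST, reaches every correct node in time to fire one of these same rules.

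For case (i) and the adopt sub-case of (iv), the pseudocode in Figure~\ref{code:bbca-chain} has $p$ best-effort broadcast a \newview block containing the complete- or adopt-certificate before or simultaneously with entering $v+1$. After GST this broadcast is delivered to every correct node within $\Delta$, so each one fires rule (ii) or (iii) by $t+\Delta$. Cases (ii) and (iii) reduce to the same fact by relaying: although the triggering \newview block might have been unicast to $p$ by a Byzantine sender, $p$ now holds a self-contained, transferable certificate and, upon entering $v+1$, embeds (forwards) that certificate in its own outgoing \newview or \leader block, giving every correct node the same trigger by $t+\Delta$.

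The less obvious situations are the noadopt sub-case of (iv) and case (v). For (iv-noadopt), $p$ already broadcast its own noadopt \newview block when its timer fired (time $\le t$) and has collected $2f+1$ \newview blocks by time $t$; of these, at least $f+1$ were broadcast by correct nodes and therefore reach every correct node by $t+\Delta$, which, combined with $p$'s own broadcast, supplies every correct node with at least $f+1$ noadopt or cert-carrying \newview blocks and hence a rule (v) (or earlier) trigger. For case (v), only one of the $f+1$ noadopt blocks received by $p$ is guaranteed to originate from a correct node, and the remaining $f$ may have been selectively unicast to $p$ by Byzantine nodes. The main obstacle of the whole proof is closing this last gap: the plan is to argue that $p$, at the instant it enters view $v+1$, best-effort broadcasts its own \newview block bundling together the $f+1$ signed noadopts it has collected. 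Because these signatures are transferable evidence, this single broadcast after GST arrives at every correct node by $t+\Delta$ and lets each of them fire rule (v) (or a stronger rule, should any of the forwarded blocks carry a certificate). This forwarding step is the linchpin of the argument, and I expect the bulk of the proof effort to go into confirming that the protocol as specified---or, if needed, a minimal strengthening of it---guarantees this re-broadcast so that the $\Delta$ bound holds uniformly across all five cases.
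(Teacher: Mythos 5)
Your case decomposition and your handling of the first four triggers (complete, received complete-certificate, received adopt-certificate, probe returning adopt) match the paper's proof essentially verbatim: in each of those cases $p$ broadcasts a \newview block carrying transferable evidence, which reaches every correct node within $\Delta$ after GST. The divergence, and the gap, is in the noadopt case. You read the entry rule as ``$f+1$ signed noadopts plus a local probe suffice to enter $v+1$,'' which creates the problem you then struggle with (only one of those $f+1$ blocks need come from a correct broadcaster), and you patch it by having $p$ re-broadcast the collected signatures --- a forwarding step that, as you yourself concede, is not in the protocol as specified and would require strengthening it. Under the paper's rules this situation does not arise: a node on the noadopt path does \emph{not} enter $v+1$ until it holds $2f+1$ \newview blocks for view $v$; the $f+1$ threshold only triggers it to probe and broadcast its \emph{own} \newview block.

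The paper closes the case with an amplification argument rather than forwarding. Since $p$ is the \emph{first} correct node to enter $v+1$ and did so via the noadopt path, it must already hold $2f+1$ noadopt \newview blocks, at least $f+1$ of which were best-effort broadcast by correct nodes and hence reach every correct node within $\Delta$ of their sending. Crossing the $f+1$ threshold causes every correct node to probe and broadcast its own noadopt \newview block, so all $2f+1$ correct nodes' own (broadcast) noadopts accumulate at every correct node, and each enters $v+1$ within the claimed bound --- no Byzantine-originated, possibly-unicast signatures ever need to be relayed. Your plan is salvageable if you replace the forwarding linchpin with this two-stage ``receive $f+1$ from correct broadcasters $\Rightarrow$ broadcast your own $\Rightarrow$ everyone accumulates $2f+1$'' argument, which is exactly the Bracha-style mechanism the view-synchronization rules in Section~\ref{sec:practical} were designed to enable.
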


\begin{proof}
We examine exhaustively all ways in which $p$ may enter view $v+1$.
\begin{enumerate}
	\item $p$ \primitive-\completes the \leader block of view $v$. In this case $p$ broadcasts a \newview block for $v+1$, which all correct nodes receive the latest after $\Delta$. This allows them to enter view $v+1$ themselves.
	\item $p$ receives a \newview block for view $v+1$ which references a completed \leader block for view $v$. As in the previous case, $p$ broadcasts a \newview block for $v+1$, and all correct nodes enter view $v+1$ the latest after $\Delta$.
	\item $p$ receives a \newview block for view $v+1$ which references an adopted \leader block for view $v$. Again, $p$ broadcasts a \newview block for $v+1$, and all correct nodes enter view $v+1$ the latest after $\Delta$.
	\item $p$ probes \bbca (either due to a timeout, or after receiving $f+1$ \newview messages with \textsc{noadopt}) which returns \textsc{\primitive-adopt}. In this case $p$ broadcasts a \newview block for $v+1$ with an adopt certificate, which all correct nodes receive the latest after $\Delta$ and enter view $v+1$ themselves.
	\item $p$ probes \bbca (either due to a timeout, or after receiving $f+1$ \newview messages with \textsc{noadopt}) which returns \textsc{\primitive-noadopt}.
	By the assumption that $p$ is the first correct node who enters view $v+1$, $p$ has received $2f+1$ \newview messages with a $\langle \textsc{noadopt}, v+1\rangle$ signature.
	At least $f+1$ of them are from correct nodes.
	Let's assume that the $f+1^{th}$  correct node who sent a \newview message with $\langle \textsc{noadopt}, v+1\rangle$ did so at time $\tau$.
	Therefore, all corrects nodes receive $f+1$ \newview messages by time at most $\tau+\Delta$ and send a \newview message themselves.
	Therefore, all correct nodes receive in total $2f+1$ messages by time $\tau + 2\Delta$ and enter view $v+1$, that is at most $\Delta$ after $p$.
	\end{enumerate}
\end{proof}

\section{Related Work}
\label{sec:related}

The focus of our work is solving consensus and state-machine-replication (SMR) while allowing underlying parallel communication for high throughput.

We therefore review SMR solutions along a spectrum: on one extreme are traditional leader-based SMR protocols, in the middle are solutions that leverage causality to parallelize certain aspects, and at the far extreme are leaderless/multi-leader solutions that commit proposals at bulk.

PBFT~\cite{PBFT}, a landmark in BFT solutions introduced two decades ago, emphasizes optimistically low latency.
It established the view-by-view ``recipe'' where the leader of a new-view submits a proposal with a \emph{justification proof}.
This approach for justifying a new leader proposal after a view-change is the foundation of all protocols in the PBFT family,
including FaB~\cite{FaB-06}, Zyzzyva~\cite{zyzzyva-07}, Aardvark~\cite{Aardvark}, and SBFT~\cite{SBFT-19}.
Notably, the last builds on the pioneering works of Cachin et. al~\cite{cachin2001secure} and Reiter~\cite{rampart-94}, to linearize the common case message complexity by employing signature aggregation.
Still, the view-change justification proof is complex to code and incurs quadratic communication complexity.
Tendermint~\cite{buchman2016tendermint} introduced a simpler view-change sub-protocol than PBFT, later adopted in Casper~\cite{casper}.
HotStuff~\cite{HotStuff-19} harnesses and enhances the simple Tendermint view-change via an extra phase;
several advances to HotStuff~\cite{jalalzai2020fast,diembft-v4,gelashvili2022jolteon,mscfcl,wendy} eventually led to
HotStuff-2~\cite{malkhi2023hotstuff} which successfully removes the requirement of an extra phase.

Follow-up research works~\cite{crain2021red,yang2022dispersedledger,RCC,mirbft,smrmadesimple,LevAri2019FairLedgerAF}
focused on improving throughput by enabling multiple nodes to act as leaders in parallel, and therefore, alleviate the computational and communication single-leader bottleneck of previously mentioned works.
Still, those protocols require similar view-change mechanisms with their single-leader counterparts, resulting in periods with underutilized network resources.
This issue was addressed by DAG-based solutions which employ causal ordering and pipelining to decouple consistent block dissemination from ordering.

One class of DAG-based algorithms are asynchronous and leaderless and include DAG-based solutions that date back to the 1990's as well as recent protocols for blockchains~\cite{dolev1993early,moser1999byzantine,baird2016swirlds,DAG-rider,Narwhal}.
Recently, a class of DAG-based algorithms emerged that borrow the leader-based approach of algorithms like PBFT/HotStuff and optimize performance for partial synchrony settings.
These DAG-based consensus protocols~\cite{GagolLSS19,danezis2018blockmania,DAG-rider,Narwhal,spiegelman2022bullshark,fino}
form a backbone sequence of \leader blocks on the DAG.
However, as discussed previously, the current generation of DAG-riding solutions incurs increased latency and a complicated logic.
Very recently there has been an effort to reduce latency by interleaving two instances of Bullshark on the same DAG
~\cite{mystenPrivate,spiegelman2023shoal}, resulting in one \cbc latency reduction.
Cordial Miners is an alternative approach to reducing latency in DAG protocols by using only causal Best-Effort Broadcast (BEB).

\system borrows the key ideas that deal with partial synchrony from the above works,
modifying it to substantially reduce latency and simplify the logic.
Importantly, \system gets rid of specialized layers altogether, leading to an arguably simpler implementation.
It further reduces latency by integrating voting in the \bbca broadcast of a backbone block and by allowing non-backbone blocks to use BEB\@.

A key ingredient in DAG-based solutions is the notion of the secure, causal, and reliable broadcast, introduced by Reiter and Birman~\cite{reiter1994securely}, and later refined by Cachin~\cite{cachin2001secure} and Duan et al.~\cite{duan2017secure}.
This primitive was utilized in a variety of BFT replicated systems,
but not necessarily in the form of ``zero message overhead'' protocols riding on a DAG.
Several of these pre-blockchain-era BFT protocols are DAG based,
notably Total~\cite{moser1999byzantine} and ToTo~\cite{dolev1993early}, both of which are BFT solutions for the asynchronous model.

GradedDAG~\cite{gradedDAG} is a randomized consensus protocol that inspired us to abstract a core primitive whose strong
guarantees simplify the DAG-level consensus logic. GradedDAG uses graded Byzantine reliable broadcast (GRBC), an \rbc variant that yields graded output, to facilitate
consensus, whereas \system uses \bbca with \Complete-Adopt exposed from a \cbc primitive whose weaker properties suffice for consensus.
\system operates under the partial synchrony model whereas GradedDAG is asynchronous.

The intuition behind \Complete-Adopt is a notion at the heart of consensus solutions, familiar to ``Commit-Adopt'' by Gafni et al.\ in~\cite{gafni1998round}.
In the context of \bbca, it stipulates that a broadcast can \complete only after a quorum has locked it.
Although many forms of broadcasts have been formulated, including gradecast with multiple grades of output,
to our knowledge no previous broadcast primitive has been formulated with a \Complete-Adopt guarantee.

Finally, we remark that data availability is an orthogonal problem which already in
PBFT~\cite{PBFT} was addressed by batching transactions hashes into blocks and form consensus on blocks.
Recently, Narwhal~\cite{Narwhal} decoupled data availability from total ordering for DAG-based protocols.
In parallel, DispersedLeder~\cite{yang2022dispersedledger} and, later, HotShot~\cite{hotshot}
proposed a data-availability component which
pre-disseminates transactions and obtains a certificate of uniqueness and availability.

\begin{credits}
	\subsubsection*{\ackname} We would like to thank Nibesh Shrestha for his valuable assistance in refining the protocol presentation and pseudocode.
\end{credits}

\newpage

\bibliographystyle{abbrv}
\bibliography{bibliography}

\end{document}